\newtheorem{theorem}{Theorem} 
\newtheorem{lemma}[theorem]{Lemma}
\newtheorem{definitionAux}[theorem]{Definition}
\newenvironment{definition}{\begin{definitionAux}\rm}{\end{definitionAux}}
\newtheorem{claimAux}{Claim}
\newtheorem{exampleAux}{Example}
\def\qed{\hfill{\qedboxempty}      
  \ifdim\lastskip<\medskipamount \removelastskip\penalty55\medskip\fi}
\def\qedboxempty{\vbox{\hrule\hbox{\vrule\kern3pt
                 \vbox{\kern3pt\kern3pt}\kern3pt\vrule}\hrule}}
\def\qedfull{\hfill{\qedboxfull}   
  \ifdim\lastskip<\medskipamount \removelastskip\penalty55\medskip\fi}
\def\qedboxfull{\vrule height 4pt width 4pt depth 0pt}
\newcommand{{\incolumn}}[1]{\begin{tabular}[c]{c} #1 \end{tabular}}
\newcommand{{\incolumnmath}}[1]{\begin{array}[c]{c} #1 \end{array}}
\newcommand{\A}{\mathcal{A}} 
 \newcommand{\D}{\mathcal{D}}
 \renewcommand{\L}{\mathcal{L}}
\newcommand{\Q}{\mathcal{Q}} \newcommand{\R}{\mathcal{R}}
 \newcommand{\V}{\mathcal{V}}
\newcommand{\ra}{\rightarrow}
\newcommand{\EXPTIME}{\textsc{ExpTime}\xspace}
\newcommand{\EXPSPACE}{\textsc{ExpSpace}\xspace}
\newcommand{\NEXPSPACE}{\textsc{NExpSpace}\xspace}
\newcommand{\NEXPTIME}{\textsc{NExpTime}\xspace}
\newcommand{\Qs}{\Q_s}
\newcommand{\Qt}{\Q_t}
\newcommand{\Qv}{\Q_v}
\newcommand{\As}{\Sigma_s}
\newcommand{\At}{\Sigma_t}
\newcommand{\Ds}{\D_s}
\newcommand{\Dt}{\D_t}
\renewcommand{\aa}{a}
\newcommand{\bb}{b}
\newcommand{\pp}{p}
\newcommand{\map}[2]{#1\leadsto #2}
\newcommand{\edge}[3]{(#1,#2,#3)}
\title{View Synthesis from Schema Mappings}
\author{
 \alignauthor Diego Calvanese\\[1mm]
   \affaddr{Faculty of Computer Science}\\
   \affaddr{Free Univ.\ of Bozen-Bolzano}\\
   \affaddr{I-39100 Bolzano, Italy}\\
   \email{calvanese@inf.unibz.it}
 \alignauthor Giuseppe De Giacomo\\Maurizio Lenzerini\\[1mm]
   \affaddr{Dip.\ di Inf.\ e Sist.}\\
   \affaddr{Univ.\ di Roma ``La Sapienza''}\\
   \affaddr{I-00198 Roma, Italy}\\
   \email{lastname@dis.uniroma1.it}
 \alignauthor Moshe Y.~Vardi\\[1mm]
   \affaddr{Dep.\ of Computer Science}\\
   \affaddr{Rice University, P.O.~Box 1892}\\
   \affaddr{Houston, TX 77251-1892, U.S.A.}\\
   \email{vardi@cs.rice.edu}
}
\begin{document}

\maketitle

\begin{abstract}
  In data management, and in particular in data integration, data exchange,
  query optimization, and data privacy, the notion of view plays a central
  role.  In several contexts, such as data integration, data mashups, and data
  warehousing, the need arises of designing views starting from a set of known
  correspondences between queries over different schemas.  In this paper we
  deal with the issue of automating such a design process.  We call this novel
  problem ``view synthesis from schema mappings'': given a set of schema
  mappings, each relating a query over a source schema to a query over a target
  schema, automatically synthesize for each source a view over the target
  schema in such a way that for each mapping, the query over the source is a
  rewriting of the query over the target wrt the synthesized views.  We study
  view synthesis from schema mappings both in the relational setting, where
  queries and views are (unions of) conjunctive queries, and in the
  semistructured data setting, where queries and views are (two-way) regular
  path queries, as well as unions of conjunctions thereof.  We provide
  techniques and complexity upper bounds for each of these cases.
\end{abstract}

\section{Introduction}
\label{sec:introduction}

A \emph{view} is essentially a (virtual or materialized) data set
that is known to be the the result of executing a specific query over
an underlying database. There are several data-management tasks where
the notion of view plays an important role~\cite{Hale00}.
\begin{itemize}
\item In database design, following the well-known principle of data
  independence, views may be used to provide a logical description of
  the storage schema (cf., \cite{TsSI96}). In this setting,
  since queries are expressed at the logical level, computing a query
  plan over the physical storage involves deciding how to use the
  views in the query-answering process.
\item
In query optimization~\cite{ChCS09}, the computation of the answer set
to a query may take advantage of materialized views, because part of
the data needed for the computation may be already available in the view
extensions.
\item
In data privacy, authorization views associated with a user represent
the data that such user is allowed to access~\cite{ZhMe05}. When the
system computes the result of a query posed to a specific user, only
those answers deriving from the content of the corresponding
authorization views are provided to the user.
\item
In data integration, data warehousing, and data exchange, a target
schema represents the information model used to either accessing, or
materializing the data residing in a set of
sources~\cite{Kola05,Lenz02}. In these contexts, views are used to
provide a characterization of the semantics of the data sources in
terms of the elements of the target schema, and answering target
queries amounts to suitable accessing the views.
\end{itemize}

The above discussion points out that techniques for using the
available views when computing the answers to query are needed in a
variety of data management scenarios. Query processing using views
is defined as the problem of computing the answer to a query by
relying on the knowledge about a set of views, where by ``knowledge''
we mean both view definitions and view extensions~\cite{Hale01}.

\smallskip {\bf View-based query processing.}  Not surprisingly, the
recent database literature witnesses a proliferation of methods,
algorithms and complexity characterizations for this problem. Two
approaches have emerged, namely, query rewriting and query
answering. In the former approach, the goal is to reformulate the
query into an expression that refers to the views (or only to the
views), and provides the answer to the query when evaluated over the
view extension. In the latter approach, one aims at computing the
so-called certain answers, i.e., the tuples satisfying the query in
all databases consistent with the views.

Query rewriting has been studied in relational databases for the case
of conjunctive queries, and many of their variants, both with and
without integrity constraints (see a survey in~\cite{Hale01}). A
comprehensive framework for view-based query answering in relational
databases, as well as several interesting complexity results for
different query languages, are presented
in~\cite{AbDu98,GrMe99}.

View-based query processing has also been
addressed in the context of semi-structured databases. In the case of
graph-based models, the problem has been studied for the class of
regular path queries and its extensions (see, for
example,~\cite{CDLV02c,GrTh03a}. In the case of XML-based model, results on
both view-based query rewriting and view-based query answering are
reported in for several variants of the XPath query language (see, for
example,~\cite{ACGK*09,CDLV09}.

\smallskip {\bf Where do the views come from?}
All the above works assume that the set of views to be used during query
processing is available. Therefore, a natural question arises: where do these
views come from? Some recent papers address this issue from different points
of views. In~\cite{DPGW10}, the authors introduce the so-called ``view
definition problem'': given a database instance and a corresponding
view instance, find the most \emph{succinct} and accurate view definition, for
a specific view definition language. Algorithms and complexity results
are reported for several family of languages. (Note that the problem dealt
with in~\cite{TrCP09} can be seen as a variant of the view definition
problem.)

In the context of both query optimization and data warehousing, there
has been a lot of interest in the so-called ``view-selection
problem''~\cite{ChHS01}, that is the problem of choosing a set of
views to materialize over a database schema, such that the cost of
evaluating a set of workload queries is minimized and such that the
views fit into a pre-specified storage constraint. Note that the input
to an instance of this problem includes knowledge about both a set of
queries that the selected views should support, and a set of
constraints on space limits for the views.

In data integration and exchange, the ``mapping discovery problem''
has received significant attention in the last years: find
correspondences between a set of data sources and a target (or,
global) schema so that queries posed to the target can be answered by
exploiting such mappings, and accessing the sources
accordingly. Several types of mappings have been investigated in the
literature~\cite{Lenz02}. In particular, in the so-called LAV
(Local-As-Views) approach, mappings associate to each source a view
over the target schema. In other words, the LAV approach to data
integration and exchange advocate the idea of modeling each source as
a view.

The problem of semi-automatically discovering mappings has been
addressed both by the database and AI
communities~\cite{RaBe01,GiYS07}. In~\cite{SeGo08}, a theoretical
framework is presented for discovering relationships between two
database instances over distinct schemata. In particular, the problem
of understanding the relationship between two instances is formalized
as that of obtaining a schema mapping so that a minimum repair of this
mapping provides a perfect description of the target
instance. In~\cite{DLDH04}, the iMAP system is described, which
semi-automatically discovers both 1-1 and complex matches between different
data schemata, where a match specifies semantic correspondences between
elements of both schemas, and is therefore analogous to mappings. None
of the above papers addresses the issue of automatically deriving LAV
mappings. This implies that none of the methods described in those
papers can be used directly to derive the view definitions associated
with the data sources.

\smallskip {\bf Synthesizing views from schema mappings.}  In this
paper, we tackle the problem of deriving view definitions from a
different angle. We assume that we have as input a set of schema
mappings, i.e., a set of correspondences between a source schema and a
target schema, where each correspondence relates a source query (i.e.,
a query over the sources) to a target query. The goal is to
automatically synthesize one view for each source relation, in such a way
that all schema mappings are captured. We use two interpretations of a
``schema mapping captured by the synthesized views''. Under the former
interpretation, the schema mapping is captured if the source query of
such mapping is a nonempty, sound rewriting of the target query with
respect to the views. Under the latter interpretation, the mapping is
captured if the source query is an exact rewriting of the target query
with respect to the views. We remind the reader that, given a set of
views $V$, a query $q_v$ over the set of view symbols in $V$ is called
a sound (exact) rewriting of a target query $q_t$ with respect to $V$
if, for each target database that is coherent with the extensions of
views $V$, the result of evaluating $q_v$ over the view extensions is a
subset of (equal to) the result of evaluating $q_t$ over the target
database.

We call this problem \emph{(exact) view synthesis from schema
  mappings}. We also refer to the decision problem associated to view
synthesis, called \emph{(exact) view existence}: check whether there
exists a set of views, one for each source, that captures all the schema
mappings.

The view-synthesis problem is relevant in several scenarios. We
briefly discuss some of them.
\begin{itemize}
\item In data warehousing, based on the consideration that business
  value can be returned as quickly as the first data marts can be
  created, the project often starts with the design of a few data
  marts, rather than with the design of the complete data warehouse
  schema. Designing a data mart involves deciding how data extracted
  from the sources populate the data warehouse concepts that are
  relevant for that data mart~\cite{Inmo96}.  In this context,
  view synthesis amounts to derive, from a set of specific data marts
  already defined, a set of LAV mappings from the data sources to the
  elements of the data warehouse. With such mappings at hand, the
  design of further data marts is greatly simplified: it is sufficient
  to characterize the content of the new data mart in terms of a query
  over the virtual warehouse, and the extraction program will be
  automatically derived by rewriting the query with respect to the
  synthesized views.

\item Similar to the case described above, real-world
  information-integration projects start by designing wrappers, i.e.,
  processes that extract data from the sources and provide single services
  for the user. This is typically the scenario of portal design, where
  data integration is performed on a query-by-query basis. Each query
  is wrapped to a service, and each time this service is invoked
  through the portal, the extraction program is activated, and the
  specific data integration task associated to it is performed. A much
  more modular, extensible, and reusable architecture is the one where
  a full-fledged data integration system, comprising the global (or,
  target) schema and the mapping to the sources, replaces this
  query-by-query architecture. View synthesis provides the technique
  to automatically derive such a data integration system. Indeed, if
  the various services are characterized in terms of queries over a
  target alphabet, the combination of wrappers and the corresponding
  queries over the target form a set of schema mappings, from which
  the view synthesis algorithm produces the LAV mappings the
  constitute the data integration system.

\item Recently, there has been some interest in so-called data
  mashups. A mashup is a web application that combines data or
  functionality from a collection of external sources, to create a new
  information service~\cite{DHPB09}. Describing the semantics of
  such a service means to describe it as a query over a
  domain-specific alphabet. Once this has been done, the mashup is
  essentially characterized as a schema mapping from the external
  sources to a virtual global database. So, similarly to the above
  mentioned cases, view synthesis can be used to turn the set of
  mashups into a full-fledged data LAV data integration system, with
  all the advantages pointed out before.
\end{itemize}

In all the above scenarios, view-synthesis is used for deriving a set
of LAV mappings starting from a set of available schema mappings. This
is not surprising, since, as we said before, in the LAV approach
sources are modeled as views. Nevertheless, one might wonder why
deriving the LAV mappings, and not using directly the original schema
mappings for data warehousing, integration and mashup. There are
several reasons why one is interested in LAV mappings:
\begin{itemize}
\item LAV mappings allow one to exploit the algorithms and the
  techniques that have been developed for view-based query processing
  in the last years.
\item Several recent papers point out that the language of LAV
  mappings enjoys many desirable properties. For example, in~\cite{tCKo09} it
  is shown that LAV mappings always admit universal
  solutions, allow the rewriting of unions of conjunctive queries over
  the target into unions of conjunctive queries over the sources, and
  are closed both under target homomorphism and union. Recently, LAV
  mappings have also been shown to be closed under composition, and to
  admit polynomial time recoverability checking~\cite{ArFM10}.
\item LAV mappings allow a characterization of the sources in terms of
  the element of the target schema, and, therefore, are crucial in all
  the scenarios where a precise understanding, and a formal
  documentation of the content of the sources are needed.
\end{itemize}

\smallskip {\bf Contributions of the paper.} In this paper we propose
a formal definition of the view-synthesis and the view-existence
problems, and present the first study on such problems, both in the
context of the relational model, and in the context of semistructured
data.

For relational data, we address the case where queries and views
are both conjunctive queries, and the case where they are unions of
conjunctive queries. In the former case, we show that both
view-existence and exact view-existence are in NP. In the latter case,
we show that both problems are in $\Pi^p_2$.

In the context of semistructured data, we refer to a graph-based data
model, as opposed to the popular XML-based model. The reason is that
in many interesting scenarios, including the ones where XML data are
used with \textsf{refids}, semistructured data form a graph rather than a
tree. For graph-based semistructured data, we first study
view synthesis and view existence in the cases where queries and views
are regular path queries. We first present a techniques for
view-existence based on automata on infinite trees~\cite{VaWo86}, and
provide an \EXPTIME upper bound for the problem. We then illustrate an
alternative technique based on the characterization of regular languages
by means of left-right congruence classes. Such a characterization allows
us to prove an \EXPSPACE upper bound for the exact view-existence problem.
Finally, by exploiting a language-theoretic characterization for
containment of regular path queries with inverse (called two-way regular
path queries) provided in~\cite{CDLV03c}, we extend the congruence
class-based technique to the case where queries and views are two-way
regular path queries, as well as conjunctive two-way regular path
queries, and unions of such queries.

\smallskip {\bf Organization of the paper.}  The paper is organized as
follows. In Section~\ref{sec:preliminaries}, we recall some preliminary
notions. In Section~\ref{sec:view-synthesis}, we formally define the problem of
view-synthesis from schema mappings, and the problem of view-existence. In
Section~\ref{sec:UCQs}, we study the problem in the case where queries and
views are conjunctive queries, and unions thereof. In
Section~\ref{sec:tree-solution} and Section~\ref{sec:cc-solution}, we
illustrate the techniques for the view synthesis problem in the case of RPQs
over semistructured data. Section~\ref{sec:extensions} extends the technique to
two-way RPQs, and to (unions of) conjunctive two-way RPQs,
respectively. Section~\ref{sec:conclusions} concludes the paper.



\section{Preliminaries}
\label{sec:preliminaries}

In this work we deal with two data models, the standard relational
model~\cite{AbHV95}, and the graph-based semistructured data
model~\cite{CDLV02c}.

Given a (relational) alphabet $\Sigma$, a database $\D$ over $\Sigma$, and a
query $q$ over $\Sigma$, we denote with $q^{\D}$ the set of tuples resulting
from evaluating $q$ in $\D$.  A query $q$ over $\Sigma$ is \emph{empty} if for
each database $\D$ over $\Sigma$ we have $q^{\D}=\emptyset$.  Given two queries
$q_1$ and $q_2$ over $\Sigma$, we say that $q_1$ is \emph{contained in} $q_2$,
denoted $q_1\sqsubseteq q_2$, if $q_1^{\D}\subseteq q_2^{\D}$ for every
database $\D$ over $\Sigma$.  The queries $q_1$ and $q_2$ are
\emph{equivalent}, denoted $q_1\equiv q_2$, if both $q_1\sqsubseteq q_2$ and
$q_2\sqsubseteq q_1$.

We assume familiarity with the relational model and with (unions of)
conjunctive queries, (U)CQs, over a relational database.  Below we recall the
basic notions regarding the graph-based semistructured data model and regular
path queries.

A \emph{semistructured database} is a finite graph whose nodes represent
objects and whose edges are labeled by elements from an alphabet of binary
relational symbols~\cite{Bune97,Abit97,CDLV03c}.  An edge $\edge{o_1}{r}{o_2}$
from object $o_1$ to object $o_2$ labeled by $r$ represents the fact that
relation $r$ holds between $o_1$ and $o_2$.
A \emph{regular-path query} (RPQ) over an alphabet $\Sigma$ of binary relation
symbols is expressed as a regular expression or a \emph{nondeterministic finite
 word automaton (NWA)} over $\Sigma$.  When evaluated on a (semistructured)
database $\D$ over $\Sigma$, an RPQ $q$ computes the set $q^{\D}$ of pairs of
objects connected in $\D$ by a path in the regular language $\L(q)$ defined by
$q$.  Containment between RPQs can be characterized in terms of containment
between the corresponding regular languages:
given two RPQs $q_1$ and $q_2$, we have that $q_1\sqsubseteq q_2$ iff
$\L(q_1)\subseteq\L(q_2)$ \cite{CDLV02c}.

We consider also \emph{two-way regular-path queries}
(2RPQs)~\cite{CDLV00c,CDLV03c}, which extend RPQs with the \emph{inverse}
operator.  Formally, let $\Sigma^\pm=\Sigma\cup\{r^- \mid r\in\Sigma\}$ be the
alphabet including a new symbol $r^-$ for each $r$ in $\Sigma$.  Intuitively,
$r^-$ denotes the inverse of the binary relation $r$.  If $p\in\Sigma^\pm$,
then we use $p^-$ to mean the \emph{inverse} of $p$, i.e., if $p$ is $r$, then
$p^-$ is $p^-$, and if $p$ is $r^-$, then $p^-$ is $r$.
2RPQs are expressed by means of an NWA over $\Sigma^\pm$.
When evaluated on a database $\D$ over $\Sigma$, a 2RPQ $q$ computes
the set $q^{\D}$ of pairs of objects connected in $\D$ by a semipath that
conforms to the regular language $\L(q)$.
A \emph{semipath} in $\D$ from $x$ to $y$ (labeled with $p_1\cdots p_n$) is a
sequence of the form
$(y_0,p_1,y_1,\ldots,y_{n-1},p_n,y_n)$,
where $n\geq 0$, $y_0=x$, $y_n=y$, and for each $y_{i-1},p_i,y_i$, we have that
$p_i\in\Sigma^\pm$, and, if $p_i=r$ then $(y_{i-1},y_i)\in r^{\D}$, and if
$p_i=r^-$ then $(y_i,y_{i-1})\in r^{\D}$.
We say that a semipath $(y_0,p_1,\ldots,p_n,y_n)$ \emph{conforms to} $q$ if
$p_1\cdots p_n\in\L(q)$.

We will also consider conjunctions of 2RPQs and their unions, abbreviated
(U)C2RPQs~\cite{CDLV00b}, which are (unions of) conjunctive queries constituted
only by binary atoms whose predicate is a 2RPQ.
Specifically, a C2RPQ $q$ of arity $n$ is written in the form
\[
  q(x_1,\ldots,x_n) \leftarrow
    q_1(y_1,y_2) \land\cdots\land q_m(y_{2m-1},y_{2m})
\]
where $x_1,\ldots,x_n,y_1,\ldots,y_{2m}$ range over a set $\{z_1,\ldots,z_k\}$
of variables, $\{x_1,\ldots,x_n\}\subseteq\{y_1,\ldots,y_{2m}\}$, and each
$q_j$ is a 2RPQ.  When evaluated over a database $\D$ over $\Sigma$, the C2RPQ
$q$ computes the set of tuples $(o_1,\ldots,o_n)$ of objects such that there is
a total mapping $\varphi$ from $\{z_1,\ldots,z_k\}$ to the objects in $\D$ with
$\varphi(x_i)=o_i$, for $i\in\{1,\ldots,n\}$, and
$(\varphi(y_{2j-1}),\varphi(y_{2j}))\in q_j^{\D}$, for $j\in\{1,\ldots,m\}$.

Containment between 2RPQs and (U)C2RPQs can also be characterized in terms 
of containment between regular languages.  We elaborate on this in
Section~\ref{sec:extensions}.
We conclude by observing that (U)CQs, RPQs, 2RPQs, and (U)C2RPQs are monotone.


\section{The View-Synthesis Problem}
\label{sec:view-synthesis}

The view-synthesis and the view-existence problems refer to a scenario
with one source schema, one target schema, and a set of schema
mappings between the two, where the goal is to synthesize one view for
each source.

To model the source and the target schemas we refer to two finite
alphabets, the \emph{source alphabet} $\As$ and the \emph{target
  alphabet} $\At$, and to model the queries used in both the mappings
and the views, we use three query languages, namely, the \emph{source
  language} $\Qs$ over $\As\cup\At$, the \emph{target language} $\Qt$
over $\At$, and the \emph{view language} $\Qv$ over $\At$. Notice that
queries expressed in the language $\Qs$ may use symbols in the target
alphabet.

A \emph{schema mapping}, or simply a \emph{mapping}, between the
source and the target is a statement of the form $\map{q_s}{q_t}$,
with $q_s\in\Qs$ and $q_t\in\Qt$. Intuitively, a mapping of this type
specifies that all answers computed by executing the source query
$q_s$ are answers to the target query $q_t$. This means that
$q_s\in\Qs$ is actually a rewriting of $q_t$. This explains why we allow $\Qs$
to use symbols in the target alphabet: in general, the rewriting of a
target query may use symbols both in the source alphabet, and in the
target alphabet~\cite{LMSS95}.

The problem we consider aims at defining one view for each source,
in such a way that all input schema mappings are captured. The
\emph{views} $V$ over $\At$ to be synthesized are modeled as a (not
necessarily total) function $V:\As\rightarrow\Qv$ that associates to
each source symbol $\aa\in\As$ a query $V(\aa)\in\Qv$ over the target
alphabet $\At$. As we said before, our notion of ``views capturing a
set of mappings'' relies on view-based query rewriting, whose
definition we now recall. In the following, given a source database
$\Ds$, and a target database $\Dt$, we say that \emph{$V$ is coherent
  with $\Ds$ and $\Dt$} if for each element $\aa$ in the source
alphabet, the extension of this element in $\Ds$ is contained in the
result of evaluating $V(\aa)$ over the database $\Dt$ (where $V(\aa)$
is the query that $V$ associates to $\aa$). Formally, $V$ is coherent
with $\Ds$ and $\Dt$ if for each $\aa\in\As$, $\aa^{\Ds}\subseteq
V(\aa)^{\Dt}$.

Following~\cite{CDLV07}, we say that a query $q_s\in\Qs$ is a \emph{sound
 rewriting}, or simply a \emph{rewriting}, of a query $q_t\in\Qt$ wrt views
$V$, if for every source database $\Ds$ and for every target database $\Dt$
such that $V$ is coherent with $\Ds$ and $\Dt$, we have that
$q_s^{\Ds}\subseteq q_t^{\Dt}$.  If $q_s^{\Ds}=q_t^{\Dt}$, the rewriting is
said to be \emph{exact}.  Further, we say that $q_s$ is \emph{empty wrt $V$} if
for every source database $\Ds$ and for every target database $\Dt$ such that
$V$ is coherent with $\Ds$ and $\Dt$, we have that $q_s^{\Ds}=\emptyset$.
Notice that, if all views in $V$ are empty (i.e., for each
$\aa\in\As$, $V(\aa)$ is the empty query), then trivially $q_s$ is
empty wrt $V$.  However $q_s$ may be empty wrt $V$ even in
the case in which all (or some) views are non-empty.


We observe that, when $\Qs$ and $\Qv$ are monotonic query languages, the above
definitions of sound and exact rewritings are equivalent to the ones where the
notion of ``$V$ being coherent with $\Ds$ and $\Dt$'' is replaced by the
condition: for each $\aa\in\As$, $\aa^{\Ds}=V(\aa)^{\Dt}$
(see~\cite{CDLV07}). It is easy to see that, under this monotonic assumption,
$q_s$ is a rewriting of $q_t$ wrt views $V$ if $q_s[V]\sqsubseteq q_t$, where
here and in the following we use $q_s[V]$ to denote the query over $\At$
obtained from $q_s$ by substituting each source symbol $\aa\in\As$ with the
query $V(\aa)$.  Further, $q_s$ is empty wrt $V$ if $q_s[V]\equiv\emptyset$,
and it is an exact rewriting wrt $V$ if $q_s[V]\equiv q_t$. Note that in all
the settings considered in the next sections, the languages $\Qs$ and $\Qv$ are
monotonic.


We are now ready to come back to the notion of ``views capturing a set of
mappings''.  We say that views $V$ \emph{capture} mappings $M$ if for each
$\map{q_s}{q_t}\in M$, the query $q_s$ is a rewriting of $q_t$ wrt $V$ and is
non-empty wrt $V$.  Analogously, we say that views $V$ \emph{exactly capture}
$M$ if for each mapping $\map{q_s}{q_t}\in M$, the query $q_s$ is an exact
rewriting of $q_t$ wrt $V$ and is non-empty wrt $V$.

We are now ready to introduce the (exact) view-synthesis and the
(exact) view-existence problems formally.

\begin{definition} The \emph{(exact) view-synthesis} problem is defined as
  follows: given a set $M$ of mappings, find views $V$ (exactly)
  capturing $M$.

  The \emph{(exact) view-existence} problem is defined as follows:
  given a set $M$ of mappings, decide whether there exist views $V$
  (exactly) capturing $M$.
\end{definition}


Finally, we also consider \emph{maximal views} capturing mappings $M$,
which are views $V$ such that there is no view $V'$ capturing $M$ such
that $(i)$ $V(a)\sqsubseteq V'(a)$ for every $a\in\As$, and $(ii)$
$V(a)\not\equiv V'(a)$ for some $a\in\As$.


\section{View Synthesis for (U)CQ\lowercase{s}}
\label{sec:UCQs}

We start our investigations by tackling the case of view-synthesis and
view-existence for conjunctive queries (CQs) and their unions (UCQs).

We start with the case where $\Qs$, $\Qt$, and $\Qv$ are CQs, and establish the
following upper bounds.

\begin{theorem}\label{thm:CQ-CQ-CQ}
  In the case where $\Qs$, $\Qt$, and $\Qv$ are CQs, the view-existence and the
  exact view-existence problems are in NP.
\end{theorem}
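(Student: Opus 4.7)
My plan is to prove both NP upper bounds via a standard guess-and-verify strategy, supported by a small-model lemma that polynomially bounds the size of capturing views. By the monotonicity observation in Section~\ref{sec:view-synthesis}, the conditions to be checked reduce to (i) $q_s[V]\sqsubseteq q_t$ for every mapping $\map{q_s}{q_t}\in M$ (plus the converse $q_t\sqsubseteq q_s[V]$ in the exact case) and (ii) $q_s[V]\not\equiv\emptyset$. By the well-known homomorphism characterization of CQ containment, each such containment is witnessed by a homomorphism between the CQs involved, and non-emptiness is automatic for CQs since every CQ is satisfied by its canonical database. Hence, once suitable views and homomorphisms are guessed, verification is polynomial; the real work is to bound the guess.

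The small-model claim I would establish is: if views $V$ (exactly) capture $M$, then there exist views $V'$ of size polynomial in $|M|$ that (exactly) capture $M$. I would start from the homomorphism witnesses $h_m:q_t\to q_s[V]$, one per mapping $m=\map{q_s}{q_t}\in M$. Recall that $q_s[V]$ is built by substituting, for each occurrence of a source atom $a(\bar x)$ in $q_s$, a fresh variable-renamed copy of the body of $V(a)$; so each atom in the image of $h_m$ either lies in the $\At$-atoms of $q_s$ itself or corresponds to a specific atom of some $V(a)$ in a specific copy. I would let $V'(a)$ retain exactly the atoms of $V(a)$ identified in this way by some $h_m$. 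The total size of $\bigcup_{a\in\As} V'(a)$ is then polynomial in $|M|$ (bounded by the sum over all $m\in M$ of the size of the corresponding $q_t$). Since $q_s[V']$ is a subquery of $q_s[V]$ that still contains every atom in the image of each $h_m$, the same $h_m$ remains a homomorphism $q_t\to q_s[V']$, preserving soundness. In the exact case, a reverse homomorphism $g_m:q_s[V]\to q_t$ exists and restricts to a valid homomorphism $q_s[V']\to q_t$ (fewer atoms to preserve, with head variables unaffected), so exactness is also preserved.

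Given this lemma, the NP algorithm guesses polynomial-size views $V$ together with, for each $m\in M$, a homomorphism $h_m$ (and, for exact view-existence, also $g_m$), and verifies each guess in polynomial time. I expect the main technical subtlety to be keeping the shrunk $V'(a)$ \emph{safe} as CQs: every head variable of $V(a)$ must still appear in the pruned body, so that each head variable of $q_s$ remains covered in $q_s[V']$. I would address this by additionally retaining, for each head position of each $V(a)$, one body atom covering it; this adds only polynomially many atoms and disturbs neither the forward homomorphisms nor, in the exact case, the restricted reverse ones.
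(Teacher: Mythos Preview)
Your proposal is correct and follows essentially the same approach as the paper: both arguments bound the size of the guessed views by observing that a containment mapping (homomorphism) from each $q_t$ into $q_s[V]$ hits only polynomially many atoms, so one may prune $V(a)$ to those atoms while preserving all required homomorphisms, and then guess-and-verify in NP. Your write-up is in fact more careful than the paper's own proof on two points the paper leaves implicit: you explicitly justify why the pruned views still satisfy both directions of containment in the exact case (restriction of the reverse homomorphism), and you flag and handle the range-restriction (safety) issue for the head variables of each $V(a)$, as well as the non-emptiness requirement (automatic for CQs).
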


\begin{proof}
Consider a mapping $q_s\leadsto q_t$, where $q_t$ contains $\ell$ atoms, and
views $V$ such that $q_s[V]\sqsubseteq q_t$.  Then, there exists a containment
mapping from $q_t$ to $q_s[V]$, and at most $\ell$ atoms of $q_s[V]$ will be in
the image of this containment mapping.  Hence, for each symbol $\aa\in\As$
occurring in $q_s$, only at most $\ell$ atoms in query $V(\aa)$ are needed to
satisfy the containment mapping.  In general, for a set $M$ of mappings, in
order to satisfy all containment mappings from $q_t$ to $q_s[V]$, for each
$q_s\leadsto q_t\in M$, we need in the query $V(\aa)$ at most
$\ell_M=\sum_{q_s\leadsto q_t\in M} \ell_{q_t}$ atoms, where $\ell_{q_t}$ is
the number of atoms in $q_t$.
Hence, in order to synthesize the views $V$, it suffices to guess, for each
symbol $\aa\in\As$ appearing in one of the mappings in $M$, a CQ $V(\aa)$
over $\At$ of size at most $\ell_M$, and check that $q_s[V]\sqsubseteq q_t$ (and
$q_t\sqsubseteq q_s[V]$ for the exact variant), for each $q_s\leadsto q_t\in M$.
This gives us immediately an NP upper bound for the (exact) view-existence
problem.
\end{proof}

In the case where $\Qs$ and $\Qt$ are UCQs, we can generalize the above
argument by considering containment between UCQs instead of containment between
CQs.

\begin{theorem}\label{thm:UCQ-UCQ-CQ}
  In the case where $\Qs$ and $\Qt$ are UCQs and $\Qv$ is CQs, the
  view-existence and the exact view-existence problems are in NP.
\end{theorem}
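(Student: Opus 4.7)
The plan is to lift the argument of Theorem~\ref{thm:CQ-CQ-CQ} to the UCQ setting, relying on the standard fact that UCQ containment decomposes disjunct-by-disjunct. Writing $q_s=\bigcup_i q_s^i$ and $q_t=\bigcup_j q_t^j$ for a mapping $q_s\leadsto q_t\in M$, note that since each $V(\aa)$ is a CQ, every $q_s^i[V]$ is a CQ, hence $q_s[V]=\bigcup_i q_s^i[V]$ is a UCQ. Then $q_s[V]\sqsubseteq q_t$ holds iff for every $q_s^i[V]$ there exist a disjunct $q_t^{j(i)}$ and a containment mapping $\mu_i\colon q_t^{j(i)}\to q_s^i[V]$; analogously, for the exact variant, $q_t\sqsubseteq q_s[V]$ holds iff for every $q_t^j$ there exist some $q_s^{i(j)}$ and a containment mapping $\nu_j\colon q_s^{i(j)}[V]\to q_t^j$.

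For the size bound I would re-run the ``trimming'' observation from Theorem~\ref{thm:CQ-CQ-CQ} one disjunct at a time: the image of each $\mu_i$ hits at most $|q_t^{j(i)}|$ atoms of $q_s^i[V]$, and these atoms originate from at most that many atoms of the various copies of the $V(\aa)$'s. Taking the union over all disjuncts of all source queries appearing in $M$ yields a polynomial bound $\ell'_M$ on the number of atoms that $V(\aa)$ must contain. The NP algorithm is then: guess, for each symbol $\aa\in\As$ occurring in $M$, a CQ $V(\aa)$ of size at most $\ell'_M$; guess the indices $j(i)$ and the homomorphisms $\mu_i$ (and, in the exact variant, the indices $i(j)$ and the homomorphisms $\nu_j$); then verify all homomorphism conditions in polynomial time. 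Non-emptiness of $q_s$ wrt $V$ is automatic once some disjunct of $q_s$ is present, since a CQ over relational atoms is always satisfiable.

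The step that I expect to require the most care concerns the exact variant, specifically showing that the same polynomial size bound works for both containments at once. Starting from an arbitrary witness $V$ of $q_s[V]\equiv q_t$, I would trim each $V(\aa)$ to the atoms needed to preserve the $\mu_i$'s, and then argue that the trimmed $V'$ still satisfies $q_t\sqsubseteq q_s[V']$. The key observation is the monotonicity of CQ substitution: dropping atoms from $V(\aa)$ yields a \emph{less} restrictive CQ, so $q_s^i[V']\sqsupseteq q_s^i[V]$, and thus $q_s[V']\sqsupseteq q_s[V]\equiv q_t$. Hence trimming never breaks the $\sqsupseteq$-direction, and the polynomial bound extracted from the $\sqsubseteq$-direction suffices in both variants, placing the overall procedure in NP.
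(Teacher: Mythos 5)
Your proposal is correct and follows essentially the same route as the paper: decompose the UCQ containments disjunct-by-disjunct into CQ containment mappings, trim each $V(\aa)$ to the polynomially many atoms hit by those mappings, and then guess-and-check views of that size in NP. The only differences are cosmetic refinements — your size bound is taken over all source disjuncts rather than per mapping (still polynomial), and you spell out explicitly, via monotonicity of substitution, why trimming does not break the $q_t\sqsubseteq q_s[V]$ direction in the exact variant, a point the paper leaves implicit.
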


\begin{proof}
Consider a mapping $q_s\leadsto q_t$ and
views $V$ such that $q_s[V]\sqsubseteq q_t$.
We have that $q_s[V]\sqsubseteq q_t$ if for each CQ $q_1$ in the UCQ
$q_s[V]$ there is a CQ $q_2$ in the UCQ $q_t$ such that $q_1\sqsubseteq q_2$.
For a set $M$ of mappings, in
order to satisfy all containment mappings from $q_t$ to $q_s[V]$, for each
$q_s\leadsto q_t\in M$, we need in the query $V(\aa)$ at most
$\ell_M=\sum_{q_s\leadsto q_t\in M} \ell_{q_t}$ atoms, where  $\ell_{q_t}$ (this time)
is the maximum number of atoms in each of the CQs in $q_t$.
Hence the upper bound on the number of atoms of $V(\aa)$ is
$\ell_M=\sum_{q_s\leadsto q_t\in M} \ell_{q_t}$.
Again, In order to synthesize the views $V$, it suffices to guess, for each
symbol $\aa\in\As$ appearing in one of the mappings in $M$, a CQ $V(\aa)$ over
$\At$ of size at most $\ell_M$, and check that $q_s[V]\sqsubseteq q_t$ (and
$q_t\sqsubseteq q_s[V]$ for the exact variant), for each $q_s\leadsto q_t\in
M$.
\end{proof}

The last case we consider is the one where, in addition to $\Qs$ and $\Qt$,
also $\Qv$ is UCQs.  As for view-existence, we observe that the problem admits
a solutions for UCQs views iff it admits a solution for CQs views.

\begin{lemma}\label{thm:UCQ-UCQ-UCQ-CQ}
  An instance of the view-existence problem admits a solution in the case where
  $\Qs$, $\Qt$ and $\Qv$ are UCQs and $\Qv$ iff it admits solution in the case
  where $\Qs$ and $\Qt$ are UCQs and $\Qv$ is CQs.
\end{lemma}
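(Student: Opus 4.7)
The plan is to prove the two directions separately. The direction $(\Leftarrow)$ is immediate: every CQ is a UCQ (a disjunction consisting of a single disjunct), so any CQ-valued view that captures $M$ is a fortiori a UCQ-valued view that captures $M$.

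For the nontrivial direction $(\Rightarrow)$, suppose $V$ is a UCQ-valued solution capturing $M$. For every source symbol $a \in \As$ appearing in some mapping of $M$, the UCQ $V(a)$ must contain at least one CQ disjunct---otherwise $q_s[V]$ would be empty whenever $a$ occurs in a source query, contradicting the non-emptiness requirement---so one can write $V(a) = V^1(a) \vee \cdots \vee V^{n_a}(a)$ with $n_a \geq 1$ and simply set $V'(a) := V^1(a)$. The resulting $V'$ is a CQ-valued view with $V'(a) \sqsubseteq V(a)$ pointwise. I would then verify that $V'$ captures $M$. Fix a mapping $\map{q_s}{q_t} \in M$. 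By the monotonicity of (U)CQs (noted at the end of Section~\ref{sec:preliminaries}), the pointwise containment $V'(a) \sqsubseteq V(a)$ lifts to $q_s[V'] \sqsubseteq q_s[V] \sqsubseteq q_t$, which yields the sound-rewriting condition. For non-emptiness, $q_s$ has at least one CQ disjunct $q_{s,j}$, and since each $V'(a)$ is a CQ, the substituted query $q_{s,j}[V']$ is itself a standard relational CQ over $\At$; every such CQ is satisfied by its canonical instance, hence $q_{s,j}[V'] \not\equiv \emptyset$ and therefore $q_s[V'] \not\equiv \emptyset$.

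The only subtle point is the convention that a UCQ contains at least one CQ disjunct, which we rely on to guarantee the existence of $V^1(a)$. Once this is granted, the argument reduces to a single application of monotonicity combined with the satisfiability of relational CQs, and no combinatorial work is required. Notice, however, that this shrinking argument does not extend to \emph{exact} view-existence: passing from $V$ to the smaller $V'$ may well break the reverse containment $q_t \sqsubseteq q_s[V']$, which is precisely why the lemma is stated for view-existence only.
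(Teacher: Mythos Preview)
Your argument is correct and mirrors the paper's: choose one CQ disjunct $V'(a)$ from each UCQ view $V(a)$, then use monotonicity to get $q_s[V']\sqsubseteq q_s[V]\sqsubseteq q_t$, with non-emptiness following from the satisfiability of relational CQs (the paper leaves this last point implicit). One minor quibble: your parenthetical justification that ``$q_s[V]$ would be empty whenever $a$ occurs in a source query'' is not quite right when $q_s$ is a UCQ and $a$ appears in only some of its disjuncts, but since you explicitly invoke the convention that every UCQ has at least one disjunct, that side-argument is superfluous and the proof stands.
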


\begin{proof}
Indeed, let $V$ be a set of UCQ views such that $q_s[V]\sqsubseteq q_t$ for
each mapping $q_s\leadsto q_t\in M$.  For such a mapping, $q_s[V]$ is a
nonempty positive query.
Consider the views $V'$ obtained from $V$ by choosing, for each symbol $\aa$
in $\As$, as $V'(\aa)$ one of the CQs in $V(\aa)$.  Then, each nonempty
CQ in $q_s[V']$ is contained in $q_s[V]$, and hence in $q_t$.  It follows that
also views $V'$ provide a solution to the view-synthesis problem.
\end{proof}

Hence by the above lemma, we trivially get:
\begin{theorem}\label{thm:UCQ-UCQ-UCQ-existence}
  In the case where $\Qs$, $\Qt$, and $\Qv$ are UCQs, the view-existence
  problem is in NP.
\end{theorem}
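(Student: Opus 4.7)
The plan is to derive this result as an essentially immediate consequence of the two preceding results: Lemma~\ref{thm:UCQ-UCQ-UCQ-CQ} and Theorem~\ref{thm:UCQ-UCQ-CQ}. The statement concerns only the (non-exact) view-existence problem, which matches the scope of Lemma~\ref{thm:UCQ-UCQ-UCQ-CQ}, so no additional work should be needed beyond gluing the two results together.

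Concretely, I would proceed as follows. Given an input set $M$ of mappings in the UCQ/UCQ/UCQ regime, Lemma~\ref{thm:UCQ-UCQ-UCQ-CQ} tells us that $M$ admits UCQ views capturing it if and only if $M$ admits CQ views capturing it. Hence deciding view-existence in the UCQ/UCQ/UCQ case is the same decision problem as view-existence in the UCQ/UCQ/CQ case, and the latter is in NP by Theorem~\ref{thm:UCQ-UCQ-CQ}. Thus the whole proof is just: ``By Lemma~\ref{thm:UCQ-UCQ-UCQ-CQ}, the instance has a UCQ solution iff it has a CQ solution; by Theorem~\ref{thm:UCQ-UCQ-CQ}, the latter property can be checked in NP.''

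The only subtlety to flag, rather than a true obstacle, is that the reduction supplied by Lemma~\ref{thm:UCQ-UCQ-UCQ-CQ} is not sound for the \emph{exact} variant: projecting a UCQ view $V(\aa)$ onto a single disjunct can break the equivalence $q_s[V]\equiv q_t$ even when containment in both directions held for the original UCQ views. This is why the theorem only claims NP membership for view-existence and not for exact view-existence, and why no adaptation of the argument is needed here. If one wanted an exact-variant result along these lines, a separate, more careful guessing argument (bounding the number and size of disjuncts in each $V(\aa)$ in terms of the mappings in $M$) would be required.
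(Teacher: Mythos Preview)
Your proposal is correct and matches the paper's own argument essentially verbatim: the paper simply states that the theorem follows trivially from Lemma~\ref{thm:UCQ-UCQ-UCQ-CQ} (together with the NP bound of Theorem~\ref{thm:UCQ-UCQ-CQ}). Your remark about why the exact variant is excluded is also on point and consistent with the paper's subsequent discussion.
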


As for exact view-existence, allowing for views that are UCQs changes indeed
the problem.

\begin{theorem}\label{thm:UCQ-UCQ-UCQ-exact}
  In the case where $\Qs$, $\Qt$, and $\Qv$ are UCQs, the exact view-existence
  problem is in $\Pi^p_2$.
\end{theorem}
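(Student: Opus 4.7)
The plan is to extend the strategy of Theorems~\ref{thm:CQ-CQ-CQ} and~\ref{thm:UCQ-UCQ-CQ} by combining a polynomial small-model property for UCQ-valued views with a careful verification-complexity analysis, which is complicated by the fact that $q_s[V]$ contains exponentially many CQ-disjuncts when $V$ is UCQ-valued.

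First, I would prove a polynomial small-model property: if UCQ-valued views $V$ with $q_s[V]\equiv q_t$ for every $q_s\leadsto q_t\in M$ exist, then so do views $V^*$ where each $V^*(\aa)$ is a UCQ of polynomially many CQs of polynomial size. Generalizing Theorem~\ref{thm:CQ-CQ-CQ}, for each CQ $p$ in each target UCQ $q_t$, the containment $q_t\sqsubseteq q_s[V]$ is witnessed by some $\hat q_s\in q_s$ and some substitution $\sigma_p$---picking a disjunct of $V(\aa)$ for each occurrence of $\aa$ in $\hat q_s$---with $p\sqsubseteq\hat q_s[\sigma_p]$; a Chandra--Merlin-style argument bounds each picked disjunct to $O(|p|)$ useful atoms. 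Collecting, over all $(q_s\leadsto q_t,p)$ pairs, the disjuncts picked by these witness substitutions yields $V^*$ of polynomial size. Since $V^*\subseteq V$ as UCQs, monotonicity gives $q_s[V^*]\sqsubseteq q_s[V]\sqsubseteq q_t$, while $q_t\sqsubseteq q_s[V^*]$ holds by construction.

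Second, I would analyze the verification of $q_s[V^*]\equiv q_t$ for a polynomial-size candidate $V^*$. The direction $q_t\sqsubseteq q_s[V^*]$ is in NP by guessing polynomial-size witnesses. The direction $q_s[V^*]\sqsubseteq q_t$ is in $\Pi_2^p$: its complement asks for a polynomial-size $(\hat q_s,\sigma)$---with $\sigma$ picking a $V^*(\aa)$-disjunct per occurrence of $\aa$---such that for no $p$ in the polynomial UCQ $q_t$ is there a containment mapping from $p$ to $\hat q_s[\sigma]$, which is in $\Sigma_2^p$ (guess $(\hat q_s,\sigma)$, then use a coNP oracle for the polynomial conjunction on $p$). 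Unlike in Theorem~\ref{thm:UCQ-UCQ-UCQ-existence}, this direction is genuinely needed here: ``mixing'' picks from different UCQ-disjuncts of $V^*(\aa)$ can introduce CQs outside $q_t$, so per-$p$ CQ-valued witnesses do not in general combine into a UCQ-valued $V^*$ also satisfying the $\sqsubseteq$ direction.

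Combining these, the overall decision is $\exists V^*:(\text{NP-check for }\sqsupseteq)\wedge(\Pi_2^p\text{-check for }\sqsubseteq)$. The main obstacle is to obtain the claimed $\Pi_2^p$ bound rather than the naive $\Sigma_3^p$ one gets by composing the existential guess of $V^*$ with the $\Pi_2^p$-verification. The plan for this collapse is to place the complement of exact view-existence in $\Sigma_2^p$: by the small-model property, together with the observation that the minimal $V^*$ obtained from any set of $\sqsupseteq$-witnesses is without loss of generality (since shrinking $V^*$ to $V^*\subseteq V$ preserves $q_s[V^*]\sqsubseteq q_s[V]\sqsubseteq q_t$), we aim to bundle, into a single polynomial-size nondeterministic guess, the $\sqsupseteq$-witnesses (and hence the canonical induced $V^*$) together with a ``bad'' substitution $\sigma$, after which a coNP oracle handles the polynomial conjunction over $p\in q_t$. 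Justifying that this bundled guess faithfully certifies nonexistence---eliminating the outer existential over $V^*$ in favor of the witness-induced $V^*$ alone---is the key technical point and is where the small-model property is essential.
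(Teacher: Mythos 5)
Your first two paragraphs essentially coincide with the paper's own argument: the same small-model property (at most $m_{q_t}$ disjuncts of at most $\ell_{q_t}$ atoms per view symbol, summed over the mappings in $M$, obtained by retaining the view disjuncts used in the containment-mapping witnesses and shrinking them to their homomorphic images, with monotonicity preserving the $q_s[V^*]\sqsubseteq q_t$ direction), and the same verification split: $q_t\sqsubseteq q_s[V^*]$ checked in NP, and $q_s[V^*]\sqsubseteq q_t$ checked by a coNP computation with an NP oracle ranging over the exponentially many CQs of the distributed expansion. Your remark that the $\sqsubseteq$ direction is genuinely needed here, unlike in Theorem~\ref{thm:UCQ-UCQ-UCQ-existence}, is also correct.

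The gap is in your final paragraph, and you half-acknowledge it yourself. The proposed collapse---placing the complement of exact view-existence in $\Sigma^p_2$ by guessing in one shot a $\sqsupseteq$-witness bundle (hence a witness-induced candidate $V^*$) together with a bad substitution $\sigma$---is unsound as sketched: such a certificate only shows that this particular witness-induced $V^*$ violates $q_s[V^*]\sqsubseteq q_t$, whereas nonexistence requires that \emph{every} candidate (equivalently, every witness bundle) fails, a universally quantified statement. Nor is there an obvious canonical candidate to discharge that universal quantifier: exactness must balance the two containment directions, and capturing views are not closed under union (the paper's own example in Section~\ref{sec:cc-solution} illustrates exactly this phenomenon), so you cannot argue that if any exact views exist then the bundled guess you refute would have been the one to find them. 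As written, your argument therefore establishes only the naive $\Sigma^p_3$ bound you yourself mention, not $\Pi^p_2$. For comparison, the paper's proof stops where your second paragraph stops: it guesses the polynomial-size views, performs the NP check and the coNP-with-NP-oracle check, and then asserts the $\Pi^p_2$ bound without addressing the quantifier composition that concerns you; so the step you single out as ``the key technical point'' is precisely the one the paper leaves unargued, but your proposal does not prove it either.
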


\begin{proof}
Let $V$ be a set of UCQ views such that $q_s[V]=q_t$ for each
mapping $q_s\leadsto q_t\in M$.  Let us first consider one such mapping
$q_s\leadsto q_t$, and let $m_{q_t}$ be the number of CQs in $q_t$, and
$\ell_{q_t}$ the maximum number of atoms in each of the CQs in $q_t$.  Since
$q_s[V]\sqsubseteq q_t$, there is a containment mapping from each of the
$m_{q_t}$ CQs in $q_t$ to some CQ in the UCQ $q'_{s,V}$, where $q'_{s,V}$ is
obtained from $q_s[V]$ by distributing, for each atom $\alpha$ of $q_s$, the
unions in the UCQ $\alpha[V]$ over the conjunctions of each CQ of $q_s$.  Hence,
for each symbol $\aa\in\As$ occurring in $q_s$, only at most $m_{q_t}$ CQs
of at most $\ell_{q_t}$ atoms in query $V(\aa)$ are needed to satisfy the
containment mappings.  It follows that, to check the existence of UCQs views
$V$ and of such a containment mapping, it suffices to guess for each $\aa$ a
UCQ over $\At$ consisting of at most $m_{q_t}$ CQs, each with at most
$\ell_{q_t}$ atoms.
When considering all mappings $q_s\leadsto q_t\in M$, similar to the case
above, we have to use instead of $m_{q_t}$ and $\ell_{q_t}$, the sum of these
parameters over all mappings in $M$.
To check whether these views satisfy $q_t\sqsubseteq q_s[V] $, it suffices to
check for the existence of a containment mapping from $q_s[V]$ to each of the
CQs in $q_t$, which can be done in NP in the size of $q_t$.
To check whether these views satisfy $q_s[V]\sqsubseteq q_t$, we have to check
whether for each CQ $q'$ obtained by selecting one of the CQs $q''$ in $q_s$
and then substituting each atom $\alpha$ in $q''$ with one of the CQs in
$\alpha[V]$, there is a containment mapping from some CQ in $q_t$ to $q'$.
We can do so by a coNP computation that makes use of an NP oracle to check for
existence of a containment mapping.  This gives us the $\Pi^p_2$ upper bound.
\end{proof}


\section{Tree-based Solution for RPQ\lowercase{s}}
\label{sec:tree-solution}

We address now the view-synthesis problem when $\Qs$, $\Qt$, and
$\Qv$ are RPQs, and present a techniques based on
tree automata on infinite trees~\cite{VaWo86}.  Specifically, we
consider automata running over complete labeled $\Sigma$-trees (i.e.,
trees in which the set of nodes is the set of all strings in $\Sigma^*$).

First, we observe that every language $L$ over an alphabet $\Sigma$ can
be represented as a function $L:\Sigma^*\ra\{0,1\}$, which, in turn, can be
considered as a $\{0,1\}$-labeling of the complete $\Sigma$-tree.
Consider a source alphabet $\As=\{\aa_1,\dots,\aa_n\}$ and the target
alphabet $\At$.  We can represent the views defined on $\As$ by the
$\{0,1\}^n$-labeled $\At$-tree $T_V$ (i.e., a $\At$-tree in which each
node is labeled with an $n$-tuple of elements of $\{0,1\}$) in which the
nodes representing the words in $V(\aa_i)$ are exactly those whose label
has~1 in the $i$-th component. We call such trees \emph{view trees}. Note
that views defined by view trees assign an arbitrary languages on
$\At$ to each source relation; these languages need not, a priori, be
regular.  We return to this point later.

Given a mapping $m=\map{q_s}{q_t}$, we construct now a tree automaton
$A_m$ accepting all view trees representing views $V$ capturing~$m$.
Concerning the check that $q_s$ is not empty wrt $V$, we observe that, if there
is a word $w=c_1\cdots c_k$ in $\L(A_s)$ such that for all the letters
$a_{i_1},\ldots,a_{i_l}$ appearing in $w$, there are nodes in the tree where
the $i_j$'s component of the label is~1. The tree autmaton has to guess a set
of letters in $\As$ that cover a word accepted by $A_s$ (we can ignore the
letters in $\At$), and then check the above condition.

We assume that $q_s$ is represented as an NWA
$A_s=(S_s,\As\cup\At,\pp^0_s,\delta_s,F_s)$ and $q_t$ is
represented as an NWA $A_t=(S_t,\At,\pp^0_t,\delta_t,F_t)$.%
\footnote{Transition functions of NWAs can be extended to sets of
states and words in a standard way.}
An \emph{annotation} for a view tree $T_V$ is a ternary relation
$\alpha \subseteq S_t^2 \times\As$. An annotation $\alpha$ is
\emph{correct} for $T_V$ if the following holds:
$(p,p',\aa_i)\in\alpha$ iff there is a word $w\in\At^*$ such
that $T_V(w)[i]=1$ and $p'\in\delta_t(p,w)$. Intuitively, $\alpha$
describes the transitions that $T_V$ can induce on $A_t$.

We say that an annotation $\alpha$ captures $\map{q_s}{q_t}$ if
for every word $w=c_1\cdots c_k$ in $\L(A_s)$ there is a sequence
$p_0,\ldots,p_{k+1}$ of states of $A_t$ such that $p_0=\pp^0_t$,
$p_{k+1}\in F_t$, and, for $i\in\{0,\ldots,k\}$, if $c_i\in\At$ then
$p_{i+1}\in\delta_t(p_i,c_i)$, and if $c_i=\aa_j\in\As$, then
$(p_i,p_{i+1},\aa_j)\in\alpha$.

The significance of an annotation capturing a mapping comes from the
following lemma.
\begin{lemma}\label{captures1}
  $V$ captures $\map{q_s}{q_t}$ iff there is an annotation $\alpha$ that is
  correct for $T_V$ and captures $\map{q_s}{q_t}$.
\end{lemma}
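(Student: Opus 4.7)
The plan is to translate between two levels of granularity: the acceptance of $A_t$ on an actual word $u\in q_s[V]$, and the ``macro-level'' sequence of states that $A_t$ traverses at the boundaries between the segments substituted for the source letters of a word $w\in\L(A_s)$.  The annotation $\alpha$ is exactly the data needed to bridge these two levels, because correctness makes each macro-transition $(p,p',\aa_i)\in\alpha$ equivalent to the existence of a micro-run of $A_t$ from $p$ to $p'$ on some word in $V(\aa_i)$.

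For the ``$\Rightarrow$'' direction, I would take $\alpha$ to be the (unique) annotation that is correct for $T_V$ and verify that it captures the mapping.  Given $w=c_1\cdots c_k\in\L(A_s)$, non-emptiness of $q_s$ w.r.t.~$V$ together with the rewriting condition yields some $u\in q_s[V]\subseteq\L(A_t)$ obtained from $w$ by substituting each source letter by a word in the corresponding view.  Fixing an accepting run of $A_t$ on $u$ and reading off the states visited at the segment boundaries gives the required macro-sequence: the target-letter conditions follow directly from $\delta_t$, and each source-letter condition $(p_i,p_{i+1},\aa_j)\in\alpha$ is witnessed by the corresponding micro-run via correctness of $\alpha$.

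For the ``$\Leftarrow$'' direction, I would start from a correct and capturing $\alpha$ and derive that $V$ captures the mapping.  Non-emptiness of $q_s[V]$ follows by applying the capture of $\alpha$ to any fixed $w\in\L(A_s)$: the resulting macro-sequence has each of its source-letter transitions witnessed, by correctness of $\alpha$, by a concrete word in the appropriate $V(\aa_j)$, and concatenating these with the target letters of $w$ produces a word in $q_s[V]$.  For the containment $q_s[V]\sqsubseteq q_t$, my approach is to take an arbitrary $u\in q_s[V]$ arising from $w\in\L(A_s)$, invoke the capture of $\alpha$ on $w$ to obtain macro-states $p_0,\ldots,p_{k+1}$, and then stitch together an accepting run of $A_t$ on $u$ from the micro-runs supplied by correctness of $\alpha$ at each source-letter position.

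The delicate step I expect to have to handle carefully is reconciling the existential flavour of the correctness clause (``\emph{there is} a word in $V(\aa_j)$ from $p$ to $p'$'') with the universal character of $q_s[V]\sqsubseteq q_t$, which demands acceptance for every realization.  I would address this by passing through the monotone-rewriting equivalence recalled in Section~\ref{sec:view-synthesis}, which phrases containment as a pure language inclusion, and by observing that the ``macro-language'' of $A_t$ obtained by replacing source-letter arcs with $\alpha$-transitions is the language-theoretic shadow of $A_t$'s behaviour on $q_s[V]$, so that capture of $\alpha$ is exactly the automata-theoretic restatement of $V$ capturing the mapping.
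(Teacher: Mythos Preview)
The paper states this lemma without proof. Your argument has genuine gaps in both directions, and in fact neither direction holds under the paper's definitions as written. In the ``$\Leftarrow$'' direction you aim to show $q_s[V]\sqsubseteq q_t$ by taking an arbitrary $u\in q_s[V]$, invoking the macro-sequence $p_0,\ldots,p_k$ for the underlying $w\in\L(A_s)$, and stitching together a run of $A_t$ on $u$ from the micro-runs supplied by correctness of $\alpha$. But correctness only says that \emph{some} word of $V(\aa_j)$ drives $A_t$ from $p_{i-1}$ to $p_i$, not the particular segment of $u$; you flag this tension, yet your ``macro-language'' fix does not close it, because that language accepts $w$ as soon as \emph{one} realization of $w$ lies in $\L(A_t)$, whereas containment requires \emph{every} realization to be accepted. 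Concretely, take $\As=\{a\}$, $\At=\{b_1,b_2\}$, $q_s=a$, $q_t=b_1$, and $V(a)=b_1+b_2$: the correct annotation is $\alpha=\{(\pp^0_t,p_1,a)\}$ (witnessed by $b_1$, with $p_1$ the unique final state of $A_t$), and $\alpha$ captures the mapping via the sequence $\pp^0_t,p_1$ for $w=a$; yet $b_2\in q_s[V]\setminus\L(q_t)$, so $V$ does not capture $\map{q_s}{q_t}$.

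In the ``$\Rightarrow$'' direction you assert that non-emptiness of $q_s$ wrt $V$ yields a realization of the \emph{given} $w\in\L(A_s)$; it only yields a realization of \emph{some} word in $\L(A_s)$. If $w$ uses a letter $\aa_j\in\As$ with $V(\aa_j)=\emptyset$, then $w$ has no realization at all, the correct $\alpha$ contains no triple with third component $\aa_j$, and hence no macro-sequence for $w$ exists. Concretely, take $q_s=a_1+a_2$, $q_t=b$, $V(a_1)=b$, $V(a_2)=\emptyset$: here $q_s[V]=\{b\}=\L(q_t)$ and is nonempty, so $V$ captures the mapping; but for $w=a_2\in\L(A_s)$ the correct $\alpha$ offers no admissible sequence, so $\alpha$ does not capture $\map{q_s}{q_t}$.
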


We now characterize when $\alpha$ captures $\map{q_s}{q_t}$.
\begin{lemma}\label{captures2}
$\alpha$ does not capture $\map{q_s}{q_t}$ iff there is a word $w=c_1
\cdots c_k$ in $\L(A_s)$ and a sequence $P_0,\ldots,P_{k+1}$ of sets
of states of $A_t$, such that $P_0=\{\pp^0_t\}$,
$P_{k+1}\cap F_t=\emptyset$, and for $i\in\{0,\ldots,k\}$, if $c_i\in\At$
then $P_{i+1}=\delta_t(P_1,w_i)$, and if $c_i=\aa_j\in\As$,
$p\in P_i$, and $(p,p',\aa_j)\in\alpha$, then $p'\in P_{i+1}$.
\end{lemma}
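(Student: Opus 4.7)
The plan is to prove this lemma by a subset-construction argument on $A_t$, adapted so that source letters are processed via the relation $\alpha$ instead of a transition function. Given a word $w = c_1 \cdots c_k$, I first define the canonical reachable sets $R_0, \ldots, R_{k+1}$ of states of $A_t$ by $R_0 = \{\pp^0_t\}$, $R_{i+1} = \delta_t(R_i, c_i)$ when $c_i \in \At$, and $R_{i+1} = \{p' \mid \exists p \in R_i \text{ with } (p, p', \aa_j) \in \alpha\}$ when $c_i = \aa_j \in \As$. A straightforward induction on $i$ shows that $R_i$ is exactly the set of states that can appear in the $i$-th position of some sequence $p_0, \ldots, p_{k+1}$ satisfying the transition conditions in the definition of ``$\alpha$ captures $\map{q_s}{q_t}$'', with $p_0 = \pp^0_t$. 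Consequently, an accepting sequence for $w$ exists iff $R_{k+1} \cap F_t \neq \emptyset$.

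For the ``only if'' direction, suppose $\alpha$ does not capture $\map{q_s}{q_t}$. Then some $w \in \L(A_s)$ admits no accepting sequence, and taking $P_i := R_i$ yields a sequence satisfying the conditions of the lemma, with $P_{k+1} \cap F_t = \emptyset$ by the characterization above. For the ``if'' direction, suppose $w$ and $P_0, \ldots, P_{k+1}$ are given with the stated properties. I prove $R_i \subseteq P_i$ by induction on $i$: the base case $R_0 = P_0 = \{\pp^0_t\}$ is immediate; for the inductive step, if $c_i \in \At$ then monotonicity of $\delta_t$ gives $R_{i+1} = \delta_t(R_i, c_i) \subseteq \delta_t(P_i, c_i) = P_{i+1}$, while if $c_i = \aa_j \in \As$ then every state in $R_{i+1}$ is an $\alpha$-successor via $\aa_j$ of some state in $R_i \subseteq P_i$, and the lemma's hypothesis places all such successors into $P_{i+1}$. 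Hence $R_{k+1} \subseteq P_{k+1}$, so $R_{k+1} \cap F_t \subseteq P_{k+1} \cap F_t = \emptyset$, and by the characterization above no accepting sequence for $w$ exists; thus $\alpha$ does not capture $\map{q_s}{q_t}$.

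The argument is essentially routine once the subset construction is set up. The one subtlety worth flagging is that the lemma only requires $P_{i+1}$ to \emph{contain} (not equal) the $\alpha$-successors of $P_i$ on source letters, so one must take care that the inclusion $R_i \subseteq P_i$ — rather than equality — is what is actually needed in the reverse direction; this works because the property ``intersection with $F_t$ is empty'' is preserved when passing from $P_{k+1}$ down to any subset.
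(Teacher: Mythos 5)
Your proof is correct, and it is the natural argument here: the paper actually states Lemma~\ref{captures2} without any proof, so there is nothing to compare against line by line, but your subset-construction on $A_t$ (with source letters processed through $\alpha$) is exactly the determinization that the paper's subsequent complexity remark presupposes --- guessing $w$ together with $P_0,\ldots,P_{k+1}$ amounts to simulating your canonical sets $R_i$ on the fly. You also handle the one real subtlety correctly, namely that the lemma demands equality of successor sets only on $\At$-letters and mere containment on $\As$-letters, so the reverse direction needs only the inclusion $R_i\subseteq P_i$, which is preserved by monotonicity and by the containment condition, and emptiness of $P_{k+1}\cap F_t$ then transfers to $R_{k+1}\cap F_t$.

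One small wording fix: your intermediate invariant says $R_i$ is \emph{exactly} the set of states occurring at position $i$ of some full sequence $p_0,\ldots,p_{k+1}$ satisfying the transition conditions. Since $\delta_t$ and $\alpha$ may have dead ends, a state in $R_i$ need not be extendable to such a full sequence, so the invariant should be phrased in terms of prefixes $p_0,\ldots,p_i$ satisfying the conditions up to step $i$. This does not affect your argument, because the only instance you use is $i=k+1$, where a prefix is already a full sequence and the equivalence ``accepting sequence exists iff $R_{k+1}\cap F_t\neq\emptyset$'' holds as stated.
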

Thus, checking that $\alpha$ does not capture $\map{q_s}{q_t}$ can be
done by guessing the word $w$ and the sequence $P_0,\ldots,P_{k+1}$ of sets of
states of $A_t$ and checking the conditions.
This can be done in space logarithmic in
$A_s$ and polynomial in $A_t$. It follows that
we can check that an annotation $\alpha$ captures
$\map{q_s}{q_t}$ in time that is polynomial in $A_s$ and
exponential in~$A_t$.

We now describe a tree automaton $A_m$ that accepts precisely
the view trees $T_V$, where $V$ captures $m=\map{q_s}{q_t}$. By
Lemma~\ref{captures1}, all $A_m$ has to do is guess an annotation
$\alpha$ that captures $m$ and check that it is correct for $T_V$.

\begin{lemma}\label{captures3}
Given $A_s$ and $A_t$, we can construct a tree automaton $A_m$ that
accepts all view trees that capture $m=\map{q_s}{q_t}$. The size of
$A_m$ is exponential in the sizes of $A_s$ and $A_t$.
\end{lemma}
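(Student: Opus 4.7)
The plan is to realize Lemma~\ref{captures1} operationally: $A_m$ will accept $T_V$ exactly when there is some annotation $\alpha \subseteq S_t^2 \times \As$ that both captures $m = \map{q_s}{q_t}$ and is correct for $T_V$. Because whether $\alpha$ captures $m$ depends only on $A_s$, $A_t$, and $\alpha$ (not on $T_V$), I will first compute as preprocessing the set $\mathcal{C}$ of all annotations that capture $m$; by Lemma~\ref{captures2} each of the at most $2^{|S_t|^2 \cdot |\As|}$ candidates can be tested in time polynomial in $|A_s|$ and exponential in $|A_t|$. I will also intersect $\mathcal{C}$ with the set of annotations satisfying the non-emptiness side condition, namely that $A_s$ accepts some word whose source letters all appear as a third coordinate of some triple in $\alpha$---a purely automata-theoretic test, again exponential in $|A_t|$. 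The automaton $A_m$ then starts by nondeterministically picking some $\alpha \in \mathcal{C}$ and carrying it as a fixed component of its state.

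The remaining job of $A_m$ is to certify on $T_V$ that the chosen $\alpha$ equals the unique annotation $\alpha^*(T_V)$ induced by $T_V$ via the correctness condition. To this end, at each node $v$ the automaton will maintain the reachability relation $R_v \subseteq S_t \times S_t$ defined by $(p,p') \in R_v$ iff $p' \in \delta_t(p, w_v)$, where $w_v$ is the word read along the path from the root to $v$. Updating $R_v$ is local: $R_\epsilon$ is the identity on $S_t$, and on a transition into a child in direction $c \in \At$ we set $R_{v \cdot c} = \{(p,p'') : \exists p'.\, (p,p') \in R_v \text{ and } p'' \in \delta_t(p', c)\}$.

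With $R_v$ at hand, the inclusion $\alpha^*(T_V) \subseteq \alpha$ reduces to a safety check carried out locally at every node: if $T_V(v)[i] = 1$ and $(p,p') \in R_v$ but $(p,p',\aa_i) \notin \alpha$, reject. The reverse inclusion $\alpha \subseteq \alpha^*(T_V)$ is a liveness obligation, since each triple in $\alpha$ must be witnessed somewhere in the tree. $A_m$ discharges these obligations by carrying a pending set $\beta \subseteq \alpha$ in its state: at each node, any triple witnessed by the current $R_v$ and label is removed from $\beta$, and when $A_m$ sends states to the children it partitions the still-pending triples among them, so each pending triple is routed down exactly one branch. A B\"uchi acceptance condition demanding that $\beta$ eventually become empty on every infinite path then guarantees that every triple of $\alpha$ is witnessed somewhere in $T_V$.

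The state space of $A_m$ is bounded by $|\mathcal{C}| \cdot 2^{|S_t|^2} \cdot 2^{|\alpha|}$, i.e., $2^{O(|S_t|^2 \cdot |\As|)}$, matching the claimed exponential bound in the sizes of $A_s$ and $A_t$. The main obstacle I expect is combining the universal safety direction (enforced at every node of an infinite tree) with the many distributed existential obligations (one per triple of $\alpha$, each satisfiable anywhere in the tree) inside a single finite-state tree automaton without inflating the state space past exponential; the pending-set-with-B\"uchi device described above is the natural vehicle for this, and once it is in place the claimed size bound follows by inspection.
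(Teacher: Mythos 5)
Your construction is essentially the paper's own proof: you guess an annotation that captures $m$ (precomputed via Lemma~\ref{captures2}) and carry it as a fixed state component, maintain the root-to-node reachability relation on $S_t$ for the safety direction ($\alpha^*(T_V)\subseteq\alpha$), and discharge the triples of $\alpha$ as pending obligations routed down single branches under a B\"uchi condition requiring the pending set to empty out, exactly as in the paper's triple $(\alpha^1,\alpha^2,\alpha^3)$ automaton with accepting states $(\alpha,\emptyset,R)$, and with the same $2^{O(|S_t|^2\cdot|\As|)}$ state bound. The only deviations are cosmetic (you check obligations against the accumulated reachability relation rather than advancing their first state letter by letter, and you fold the non-emptiness side condition into the annotation preprocessing rather than deferring it to the theorem), so the proposal is correct and matches the paper's approach.
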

\begin{proof}
We construct $A_m=(S_m,\Sigma_m,\pp^0_m,\delta_m,F_m)$ as a B\"uchi
automaton on infinite trees \cite{VaWo86}. Recall that
$\Sigma_m=\{0,1\}^n$.  The state set is
$S_m=(2^{S_t^2\times\As})^2\times 2^{S_t^2} $.
That is, each state is a triple consisting of a pair of annotations
and a binary relation on $S_t$.
The initial state set $S^0_m$ consists of all triples
$\beta=(\alpha,\alpha,R_=)$, where $\alpha$ captures~$m$ and
$R_=\{(p,p,\aa)\mid p\in S_t\}$. Intuitively, an initial state
is a guess of an annotation. The automaton $A_m$ now has to check its
correctness; the second and third component of the state are used
for ``bookkeeping.''

Let $\At=\{b_1,\ldots,b_k\}$.
Then $(\beta_1,\ldots,\beta_k)\in\delta_m(\beta,c)$,
where $c=(c_1,\ldots,c_n)$, $\beta=(\alpha^1,\alpha^2,\alpha^3)$,
and $\beta_j=(\alpha_j^1,\alpha_j^2,\alpha_j^3)$ for $j\in\{1,\ldots,k\}$,
if the following hold:
\begin{enumerate}
\item
If $(p_1,p_2)\in\alpha^3$ and $c_i=1$, then $(p_1,p_2,\aa_i)\in\alpha^1$.
\item
$\alpha^1_j=\alpha^1$; that is, the first component does not change.
\item
$\alpha^3_j=\{(p_1,p_2')\mid (p_1,p_2)\in\alpha^3\mbox{ and }
p_2'\in\delta_t(p_2,b_j)\}$; that is, the third component remembers
paths between states of $A_t$.
\item
If $(p_1,p_2,\aa_i)\in\alpha^2$, then either $p_1=p_2$ and $c_i=1$, or,
for some $j\in\{1,\ldots,m\}$ and $p_1'\in\delta_t(p_1,b_j)$, we have
that $(p_1',p_2,\aa_i)\in\alpha^2_j$.
\end{enumerate}
Thus, the second component of the state helps to check
that all the paths in $A_t$ predicted by the guessed annotation are
fulfilled in the tree, while the third component helps to check that
all the paths that do occur in the tree are predicted by the guessed
annotation.  This means that the second component must ultimately
become empty.  Note that once it becomes empty, it can stay empty.
Thus the set $F_m$ of accepting states consists of all triples
of the form $(\alpha,\emptyset,R)$.

Note that the number of states of $A_m$ is exponential in the number of states
of $A_t$ and exponential in the alphabet of $A_s$. The alphabet of $A_m$ is
exponential in the size of the alphabet of $A_s$.
\end{proof}

\begin{theorem}\label{thm:RPQ-tree}
  In the case where $\Qs$, $\Qt$, and $\Qv$ are RPQs, the view existence
  problem is \EXPTIME.
\end{theorem}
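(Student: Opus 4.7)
The plan is to reduce view-existence over a set $M$ of mappings to emptiness of a single B\"uchi tree automaton obtained by intersecting per-mapping automata, exploiting Lemma~\ref{captures3}. For each mapping $m=\map{q_s}{q_t}\in M$, that lemma already supplies an exponential-size B\"uchi tree automaton $A_m$ accepting exactly the view trees $T_V$ whose views $V$ satisfy $q_s[V]\sqsubseteq q_t$. What remains is to enforce the non-emptiness-wrt-$V$ clause in the definition of capture, and then to combine the per-mapping constraints.

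For each mapping, I would construct a companion B\"uchi tree automaton $B_m$ accepting $T_V$ iff $q_s$ is non-empty wrt $V$. The key observation, as indicated in the paragraph preceding Lemma~\ref{captures1}, is that since $q_s$ is monotone, $q_s[V]\not\equiv\emptyset$ iff there is a word $w\in\L(A_s)$ whose set of source letters $S\subseteq\As$ satisfies $V(\aa_i)\neq\emptyset$ for every $\aa_i\in S$. Accordingly, $B_m$ first fixes such an $S$ at the root, verifying statically---by a precomputation over $A_s$ viewed as a word automaton that ignores target letters---that $\L(A_s)$ admits some word using only source letters in $S$; it then walks the view tree nondeterministically, carrying a state component that tracks the remaining letters in $S$ still to be witnessed, and accepts (via the B\"uchi condition) once every $\aa_i\in S$ has been realized at some node with a $1$ in coordinate $i$. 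The resulting $B_m$ is again at most exponential in the input.

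I then form the product $\mathcal{A}=\bigcap_{m\in M}(A_m\cap B_m)$, a B\"uchi tree automaton of size still exponential in the input (a product of polynomially many exponential-size factors), and test it for emptiness. Since emptiness of B\"uchi tree automata is decidable in polynomial time in the size of the automaton~\cite{VaWo86}, the whole procedure runs in \EXPTIME. Finally, every non-empty B\"uchi tree automaton accepts some regular tree, so the witness can be chosen with finitely many distinct subtrees; its projection onto each coordinate is then a regular subset of $\At^*$, i.e., a legitimate RPQ, which ensures that the synthesized views lie in $\Qv=\RPQs$ and not just in some larger tree-definable class. The main delicacy of the argument lies in $B_m$: the existential obligations ``each $\aa_i\in S$ has a witness in $T_V$'' must all be discharged, possibly on distinct branches, and it is precisely the branching nondeterminism together with the B\"uchi acceptance condition on the tree that lets one satisfy all of them simultaneously while staying within the required complexity bound.
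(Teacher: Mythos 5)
Your proposal is correct and follows essentially the same route as the paper: per-mapping B\"uchi tree automata over view trees via Lemma~\ref{captures3}, a product over all mappings, a polynomial-time emptiness test yielding the \EXPTIME bound, and extraction of regular (hence RPQ) views from a regular witness tree. Your companion automaton $B_m$ is just a more explicit rendering of the non-emptiness check the paper itself sketches (guessing a set of source letters covering some word of $\L(A_s)$ and requiring a witness node for each), so this is a matter of packaging rather than a different method.
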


\begin{proof}
We showed how to construct, with an exponential blowup a B\"uchi tree
automaton that accept all view trees that capture $m=\map{q_s}{q_t}$.
Note that computing the set of initial states, requires applying
Lemma~\ref{captures2}, and takes exponential time.
To handle a set $M$ of mappings, we simply take the product of these
automata (see product construction in~\cite{VaWo86}). To check that
the views are nonempty, we take the product with a very simple
automaton that checks that one of the labels of the tree is not
identically 0. We thus obtain a B\"uchi tree automaton $A_M$ that accepts
all view trees that represents nonempty views that capture $M$.

We can now check the nonemptiness of $A_M$ in quadratic time~\cite{VaWo86}.
If $\L(A_M)=\emptyset$, then the answer to the
view-existence problem is negative. If $\L(A_M)\neq\emptyset$, then
the nonemptiness algorithm returns a witness in the form of a
\emph{transducer} $A=(S,\At,\Sigma_m,\pp_0,\delta,\gamma)$,
where $S$ is a set of states (which is a subset of the state set of
the tree automaton), $\At$ is the input alphabet,
$\Sigma_m=\{0,1\}^n$ is the output alphabet, $\pp_0$ is a start state,
$\delta: S\times\At\rightarrow S$ is a deterministic transition
function, and $\gamma: S \rightarrow \Sigma_m$ is the output
function. From this transducer we can obtain an RPQ for each letter
$\aa_i\in\As$, represented by the DWA $A=(S,\At,\pp_0,\delta,F_i)$,
where $F_i=\{p\mid p\in S \mbox{ and } \gamma(p)[i]=1\}$.
\end{proof}

Note that the proof of Theorem~\ref{thm:RPQ-tree} implies that, wrt the
view-existence problem, considering views that are RPQs (as opposed to general,
possibly non-regular, path languages) is not a restriction, since the existence
of general views implies the existence of regular ones.  In fact, a similar
result holds also for the exact view-existence problem, as follows from the
results in the next section.  This is also in line with a similar observation
holding for the existence of rewritings of RPQs wrt RPQ views~\cite{CDLV02c}.

A final comment regarding maximal views. A view tree $T_V$ is maximal with
respect to a set $M$ of mappings if $V$ captures $M$, but flipping in one of
the labels a single $0$ to $1$ would destroy that property. Our tree-automata
techniques can be extended to produce maximal views, by quantifying over all
such flippings. This, however, would imply an additional exponential increase
in the complexity of the algorithm.


\section{Congruence Class Based Solution for RPQ\lowercase{s}}
\label{sec:cc-solution}

We present now an alternative technique for view-synthesis for RPQs that will
allow us also to extend our results to more expressive forms of queries.  Our
solution is based on the characterization of regular languages by means of
congruence classes~\cite{Pin97}.

We start by showing that we can reduce the (exact) view-synthesis problem with
a set of mappings $M$ to an (exact) view-synthesis problem with a single
mapping $m$.

\begin{theorem}
  \label{thm:rpqs-single-mapping}
  Given a set $M$ of RPQ mappings, there is a single RPQ mapping $m$ such that,
  for every set $V$ of RPQ views, $V$ (exactly) captures $M$ iff $V$ (exactly)
  captures $m$.
\end{theorem}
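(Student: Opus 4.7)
The plan is to concatenate the $k$ mappings of $M = \{q_s^i \leadsto q_t^i\}_{i=1}^k$ into a single mapping using fresh delimiter symbols. I would introduce $k+1$ fresh symbols $f_0, f_1, \ldots, f_k$ (added to both $\As$ and $\At$) and set
\[
  q_s \;=\; f_0 \cdot q_s^1 \cdot f_1 \cdot q_s^2 \cdots f_{k-1} \cdot q_s^k \cdot f_k,
  \qquad
  q_t \;=\; f_0 \cdot q_t^1 \cdot f_1 \cdots f_{k-1} \cdot q_t^k \cdot f_k,
\]
which are RPQs by closure of regular languages under concatenation. The single mapping is $m = q_s \leadsto q_t$, and views $V$ are treated as defined on the enlarged source alphabet with the canonical interpretation $V(f_i) = \{f_i\}$.

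The central property is that since none of the $f_i$'s occurs in any $q_s^j$ or $q_t^j$, every word matching $q_t$ admits a unique factorization into the $k$ blocks $w_1,\ldots,w_k\in\At^*$ determined by the $f_i$-positions, and so does every word of $q_s[V]$. The forward direction ($V$ captures $M \Rightarrow V$ captures $m$) is then a direct term-wise check: non-emptiness of each $q_s^i[V]$ gives non-emptiness of $q_s[V]$, and the inclusions $q_s^i[V] \subseteq q_t^i$ combine concatenation-wise to $q_s[V] \subseteq q_t$; the exact case is analogous. For the reverse direction, any witness $w \in q_s[V]$ decomposes uniquely, so each $q_s^i[V]$ is non-empty; moreover, substituting an arbitrary $w_i'\in q_s^i[V]$ into the $i$-th block of such a witness produces a new word in $q_s[V]\subseteq q_t$ whose unique factorization forces $w_i'\in q_t^i$, establishing $q_s^i[V]\subseteq q_t^i$. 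Exactness follows by extracting $q_t^i\subseteq q_s^i[V]$ from $q_t\subseteq q_s[V]$ analogously.

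The main obstacle will be justifying the canonical-interpretation convention, i.e., showing that an arbitrary view $V'$ capturing $m$ (possibly with non-singleton $V'(f_i)$, or with $V'(a)$ containing words that use fresh symbols) can be replaced by a canonical $V$ still capturing $m$. I would prove this by projection: set $V(f_i) = \{f_i\}$ and $V(a) = V'(a) \cap \At^*$ for $a\in\As$, and argue that the rigid delimiter pattern of $q_t$---with $f_i$'s in fixed positions separating $\At^*$-blocks---ensures that only the canonical parts of $V'(f_i)$ and the $\At^*$-parts of $V'(a)$ can contribute to words actually landing in $q_t$. The unique factorization then guarantees enough canonical witnesses survive projection to preserve non-emptiness of each $q_s^i[V]$, while the inclusion $q_s[V]\subseteq q_t$ is preserved by monotonicity of RPQ substitution.
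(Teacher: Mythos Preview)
Your concatenation-with-delimiters idea and the unique-factorization argument in your second paragraph are essentially the paper's approach. The problem is where you put the delimiters. By adding $f_0,\ldots,f_k$ to the \emph{source} alphabet $\As$, you make them subject to view substitution, and then the universal quantifier ``for every set $V$ of RPQ views'' in the theorem fails. Concretely, take $M=\{a_1\leadsto b,\ a_2\leadsto c\}$, so $q_s=f_0 a_1 f_1 a_2 f_2$ and $q_t=f_0 b f_1 c f_2$. The views
\[
  V'(f_0)=f_0 b f_1,\quad V'(f_1)=\varepsilon,\quad V'(f_2)=f_2,\quad V'(a_1)=c,\quad V'(a_2)=\varepsilon
\]
give $q_s[V']=\{f_0 b f_1 c f_2\}=\L(q_t)$, so $V'$ exactly captures $m$; yet $q_s^1[V']=\{c\}\not\subseteq\{b\}=\L(q_t^1)$, so $V'$ does not capture $M$. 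Your projection fix fails on the same example: projecting to $V(f_i)=\{f_i\}$, $V(a_1)=\{c\}$, $V(a_2)=\{\varepsilon\}$ yields $q_s[V]=\{f_0 c f_1 f_2\}\not\subseteq\L(q_t)$. The claim that ``only the canonical parts can contribute to words landing in $q_t$'' is false---the non-canonical $V'(f_0)$ is exactly what makes $V'$ work.

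The paper avoids this entirely by adding a single fresh symbol $\#$ to $\At$ \emph{only}. Since $q_s$ is a query over $\As\cup\At$, the symbol $\#$ may appear in $q_s$ without being a source symbol, and hence is never touched by the substitution $[V]$. Then $q_{M,s}[V]=q_{0,s}[V]\cdot\#\cdots\#\cdot q_{h,s}[V]$ holds for \emph{every} $V$ automatically, the domain of views stays $\As$, and your unique-factorization argument goes through with no need for a ``canonical interpretation'' or a projection step. (One repeated delimiter suffices; distinct $f_i$'s and the outer markers $f_0,f_k$ are not needed.)
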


\begin{proof}
Let $M=\{\map{q_{0,s}}{q_{0,t}},\ldots,\map{q_{h,s}}{q_{h,t}}\}$ be the set of
mappings from $\As\cup\At$ to $\At$, and let $\At'=\At\cup\{\#\}$, where $\#$
is a fresh target symbol not occurring in $\As$ and $\At$.  We define a mapping
$m=\map{q_{M,s}}{q_{M,t}}$ from $\As\cup\At'$ to $\At'$, by setting
$q_{M,s}=q_{0,s}{\cdot}\#{\cdot}q_{1,s}{\cdot}\#\cdots\#{\cdot}q_{h,s}$ and
$q_{M,t}=q_{0,t}{\cdot}\#{\cdot}q_{1,t}{\cdot}\#\cdots\#{\cdot}q_{h,t}$.
Intuitively, the fresh symbol $\#$ acts as a separator between the different
parts of $q_{M,s}$ and $q_{M,t}$.  It is easy to verify that
$q_{i,s}[V]\sqsubseteq q_{i,t}$, for $i\in\{1,\ldots,h\}$ iff
$q_{M,s}[V]\sqsubseteq q_{M,t}$.
\end{proof}

Hence, w.l.o.g., in the following we will consider only the case where there is
a single mapping $\map{q_s}{q_t}$.

Let $A_t=(S_t,\At,\pp^0_t,\delta_t,F_t)$ be an NWA for $q_t$.  Then $A_t$
defines a set of (left-right) congruence classes partitioning $\At^*$.  Note
that the standard treatment of congruence classes is done with deterministic
automata~\cite{Pin97}, but we do it here with NWAs to avoid an exponential
blow-up.  For a word $w\in\At^*$, we denote with $[w]_{A_t}$ the congruence
class to which $w$ belongs.  Each congruence class is characterized by a binary
relation $R\subseteq S_t\times S_t$, where the congruence class associated with
$R$ is $C_R=\{w\in\At^*\mid \pp_2\in\delta_t(\pp_1,w,) \mbox{ iff }
(\pp_1,\pp_2)\in R\}$.  Intuitively, each word $w\in C_R$ connects
$\pp_1$ to $\pp_2$ in $A_t$, for each pair $(\pp_1,\pp_2)\in R$.

It follows immediately from the characterization of the congruence classes in
terms of binary relations over the states of $A_t$ that the set of congruence
classes is closed under concatenation.  Specifically, for two congruence
classes $C_{R_1}$ and $C_{R_2}$, respectively with associated relations $R_1$
and $R_2$, the binary relation associated with $C_{R_1}\cdot C_{R_2}$ is
$R_1\circ R_2$.\footnote{We use $L_1\cdot L_2$ to denote concatenation between
 languages, and $R_1\circ R_2$ to denote composition of binary relations.}
As a consequence, the set $\R$ of binary relations associated with the
congruence classes is $\R=2^{S_t\times S_t}$.  Let
$R_{\varepsilon}=\{(\pp,\pp)\mid \pp\in S_t\}$ and
$R_\bb=\{(\pp_1,\pp_2)\mid \pp_2\in\delta_t(\pp_1,\bb)\}$, for
each $\bb\in\At$.
Then, for each $R\in\R$, the congruence class $C_R$ associated with $R$ is
accepted by the deterministic word automaton
$A_R=(\R,\At,R_{\varepsilon},\delta_{\sim},\{R\})$, where
$\delta_{\sim}(R,\bb)=R\circ R_\bb$, for each $R\in\R$ and $\bb\in\At$.  Notice
that, if $A_t$ has $m$ states, then the number of states of $A_R$ is $2^{m^2}$.

Let us consider the (non-exact) view-synthesis problem.
We observe first that we need to allow for the presence of empty queries for
the views.  Consider, e.g., $q_s=(a_1+a_3)\cdot(a_2+a_3)$ and $q_t=b_1\cdot
b_2$.  It is easy to see that the only views capturing $\map{q_s}{q_t}$ are
\[
  V(a_1) = b_1, \qquad
  V(a_2) = b_2, \qquad
  V(a_3) = \emptyset.
\]
Observe also that $b_1=[b_1]_{A_t}$ and  $b_2=[b_2]_{A_t}$, where $A_t$ is the
obvious NWA for $b_1\cdot b_2$.

We now prove two lemmas that will be used in the following.  The first lemma
states that w.l.o.g.\ we can restrict the attention to views capturing the
mapping that are \emph{singleton views}, i.e., views that are either empty or
constituted by a single word.

\begin{lemma}\label{lem:ssd-view-existence-word}
  Let $q_s$ be an RPQ over $\As\cup\At$, and $q_t$ an RPQ over $\At$.  If there
  exist RPQ views $V$ capturing $\map{q_s}{q_t}$, then there exist views $V'$
  capturing $\map{q_s}{q_t}$ such that each view in $V'$ is either a single
  word over $\At$ or empty.
\end{lemma}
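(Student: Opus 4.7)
The plan is to start from the given views $V$ capturing $\map{q_s}{q_t}$ and to prune each $V(\aa)$ down to at most one word, while preserving both the containment and the non-emptiness conditions. Since $V$ captures $\map{q_s}{q_t}$, we know that $q_s[V]$ is non-empty. Viewing $q_s$ as an RPQ over $\As\cup\At$, this means there is a word $v = c_1\cdots c_k \in \L(q_s)$ such that for each position $i$ with $c_i = \aa \in \As$ we can choose some $w_i \in V(\aa)$, and the concatenation of these $w_i$'s (interleaved with the target letters of $v$) yields an element of $\L(q_s[V])$. In particular, for every source symbol $\aa\in\As$ that occurs in $v$, the language $V(\aa)$ must be non-empty.

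The construction of $V'$ is then straightforward: for each source symbol $\aa$ occurring in $v$, fix one specific word $w_\aa\in V(\aa)$ (say, the $w_i$ corresponding to the leftmost occurrence of $\aa$ in $v$) and set $V'(\aa)=\{w_\aa\}$; for every other source symbol $\aa$, set $V'(\aa)=\emptyset$. By construction every $V'(\aa)$ is either a single word or empty, and moreover $V'(\aa)\sqsubseteq V(\aa)$ for every $\aa\in\As$.

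What then remains is to check that $V'$ still captures $\map{q_s}{q_t}$. The containment $q_s[V']\sqsubseteq q_t$ follows from monotonicity of RPQs under substitution: since $V'(\aa)\sqsubseteq V(\aa)$ for every $\aa$, we get $q_s[V']\sqsubseteq q_s[V]$, and $q_s[V]\sqsubseteq q_t$ holds by hypothesis. Non-emptiness of $q_s[V']$ is witnessed explicitly by the word obtained from $v$ by replacing every occurrence of each $\aa\in\As$ by the chosen $w_\aa$: this word lies in $\L(q_s[V'])$ directly by the definitions, because $v\in\L(q_s)$ and each $w_\aa\in V'(\aa)$.

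The only subtle point, rather than a real obstacle, is that distinct occurrences of the same source symbol $\aa$ within the witness $v$ could a priori have been expanded using distinct words from $V(\aa)$ when producing an element of $q_s[V]$. The single-witness construction sidesteps this by committing to one expansion per source symbol, which is exactly what yields the singleton property of $V'$; non-emptiness is still preserved because only one witness word in $\L(q_s[V'])$ is required.
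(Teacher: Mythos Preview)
Your proposal is correct and follows essentially the same approach as the paper's own proof: both pick a witness word $v\in\L(q_s)$ from the non-emptiness of $q_s[V]$, define $V'(\aa)$ to be a single chosen word of $V(\aa)$ for each $\aa$ occurring in $v$ and $\emptyset$ otherwise, and then conclude containment from $V'(\aa)\sqsubseteq V(\aa)$ via monotonicity and non-emptiness from the explicitly constructed witness. Your discussion of the ``subtle point'' about distinct occurrences of the same source symbol is a nice clarification that the paper's terse argument glosses over, but the underlying idea is identical.
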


\begin{proof}
Since $q_s[V]\not\equiv\emptyset$ and $q_s[V]\sqsubseteq q_t$, there exists a
word $a_1\cdots a_k\in\L(q_s)$ and a word $w_1\cdots w_k\in\L(q_s[V])$ and
hence in $\L(A_t)$, where $w_j=\L(V(a_j))$.  To define new views $V'$, we
consider for each $a\in\As$ appearing in $a_1\cdots a_k$ a word $w^a\in V(a)$
and set $V'(a)=w^a$. Instead, for each $a\in\As$ not appearing in $a_1\cdots
a_k$, we set $V'(a)=\emptyset$.  Now, $q_s[V']$ is nonempty by construction,
and since $V'(a)\sqsubseteq V(a)$ for every $a\in\As$, we have that
$q_s[V']\sqsubseteq q_s[V]\sqsubseteq q_t$.
\end{proof}

The next lemma shows that one can close views under congruence.

\begin{lemma}\label{lem:ssd-rewriting-cc}
  Let $q_s$ be an RPQ over $\As\cup\At$, $q_t$ an RPQ over $\At$ expressed
  through an NWA $A_t$, and $V$ singleton views capturing $\map{q_s}{q_t}$.
  Then $V'$ defined such that
  \[
    \L(V'(a))=
    \begin{cases}
      [w^a]_{A_t}, & \text{if } V(a)=w^a\\
      \emptyset, & \text{if } V(a)=\emptyset.
    \end{cases}
  \]
  captures $\map{q_s}{q_t}$.
\end{lemma}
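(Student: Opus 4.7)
The plan is to establish two things: that $q_s[V']$ remains nonempty, and that $q_s[V']\sqsubseteq q_t$. Nonemptiness is essentially free: since $V(a)=w^a$ implies $w^a\in[w^a]_{A_t}=\L(V'(a))$, we have $\L(V(a))\subseteq\L(V'(a))$ for every $a\in\As$, so monotonicity of RPQs gives $\L(q_s[V])\subseteq\L(q_s[V'])$, and since $V$ captures $\map{q_s}{q_t}$, $q_s[V]$ is nonempty, hence so is $q_s[V']$.

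The substantive part is to show that $\L(q_s[V'])\subseteq\L(A_t)$. I would pick an arbitrary word $u\in\L(q_s[V'])$. By definition of substitution, there exists a word $c_1\cdots c_k\in\L(q_s)$ over $\As\cup\At$ and a factorization $u=u_1\cdots u_k$ such that for each $i$, either $c_i\in\At$ and $u_i=c_i$, or $c_i=a_{j_i}\in\As$ and $u_i\in\L(V'(a_{j_i}))=[w^{a_{j_i}}]_{A_t}$ (note that the empty case for $V'$ cannot occur in any factor used by $u$, since otherwise $u$ would not exist). Consider the companion word $u^\star=u_1^\star\cdots u_k^\star$ obtained by the same decomposition but with $u_i^\star=w^{a_{j_i}}$ whenever $c_i\in\As$ and $u_i^\star=c_i$ otherwise. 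By construction $u^\star\in\L(q_s[V])\subseteq\L(q_t)=\L(A_t)$.

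The core step is now to conclude $u\in\L(A_t)$ from $u^\star\in\L(A_t)$. For each $i$, $u_i$ and $u_i^\star$ lie in the same congruence class of $A_t$ (either both equal a letter of $\At$, or both lie in $[w^{a_{j_i}}]_{A_t}$). Since the set of congruence classes of $A_t$ is closed under concatenation (as noted in the paragraph preceding the lemma, via $C_{R_1}\cdot C_{R_2}\subseteq C_{R_1\circ R_2}$), it follows that $[u]_{A_t}=[u^\star]_{A_t}$. Finally, $\L(A_t)$ is itself a union of congruence classes, so $u^\star\in\L(A_t)$ forces $u\in\L(A_t)$, as required.

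The main obstacle I anticipate is the careful handling of the factorization: one must be sure that when substituting elements of congruence classes into positions labeled by $\As$ symbols, the concatenation still respects the $A_t$-congruence, and this is exactly what the $R_1\circ R_2$ composition guarantees. The rest is bookkeeping. The lemma also relies implicitly on $V'$ being well-defined, which is immediate from the singleton assumption on $V$ supplied by Lemma~\ref{lem:ssd-view-existence-word}.
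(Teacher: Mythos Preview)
Your proposal is correct and follows essentially the same approach as the paper: both arguments take a word $c_1\cdots c_k\in\L(q_s)$, use that the $V$-substituted companion lies in $\L(A_t)$, and then show that swapping each factor for any word in the same congruence class preserves membership in $\L(A_t)$. The paper does this by explicitly reusing the states of an accepting run of $A_t$, whereas you phrase it via closure of congruence classes under concatenation and the fact that $\L(A_t)$ is a union of such classes---the same idea at a slightly higher level of abstraction; you are also a bit more careful than the paper in explicitly checking nonemptiness and handling letters from~$\At$.
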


\begin{proof}
Consider a word $a_1\cdots a_h\in\L(q_s)$.  If there is one of the $a_i$ such
that $V(a_i)=\emptyset$, then
$\L(V(a_1))\cdots\L(V(a_h))=\emptyset\subseteq\L(q_t)$.  Otherwise, we have
that $\L(V(a_i))=\{w^{a_i}\}$, for $i\in\{1,\ldots,h\}$, and since
$w^{a_1}\cdots w^{a_h}\in\L(q_s[V])\subseteq\L(A_t)$, there is a
sequence $\pp_0,\pp_1,\ldots,\pp_h$ of states of $A_t$ such that
$\pp_0=\pp^0_t$, $\pp_h\in F_t$, and
$\pp_i\in\in\delta_t(\pp_{i-1},w^{a_i})$, for $i\in\{1,\ldots,h\}$.
Consider now, for each $i\in\{1,\ldots,h\}$, a word $w'_i\in
\L(V'(a_i))=[w^{a_i}]_{A_t}$.  Making use of the characterization of
$[w^{a_i}]_{A_t}$ in terms of a binary relation over $S_t$, we have for each
word in $[w^{a_i}]_{A_t}$, and in particular for $w'_i$, that
$\pp_i\in\delta_t(\pp_{i-1},w'_i)$.  Hence,
$\pp_h\in\delta_t(\pp_o^t,w'_1\cdots w'_h)$ and $w'_1\cdots w'_h\in\L(q_t)$.
\end{proof}

From these two lemmas we get that, when searching for views capturing the
mappings, we can restrict the attention to views that are congruence classes
for $A_t$.

\begin{lemma}\label{lem:ssd-view-existence-cc}
  Let $q_s$ be an RPQ over $\As\cup\At$, and $q_t$ an RPQ over $\At$ expressed
  through an NWA $A_t$.  If there exist RPQ views $V$ over $\At$ capturing
  $\map{q_s}{q_t}$, then there exist views $V'$ capturing $\map{q_s}{q_t}$ such
  that each view in $V'$ is a congruence class for $A_t$.
\end{lemma}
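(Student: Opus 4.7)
The plan is to combine Lemmas~\ref{lem:ssd-view-existence-word} and~\ref{lem:ssd-rewriting-cc} in two straightforward steps. First, I would invoke Lemma~\ref{lem:ssd-view-existence-word} on the hypothesised views $V$ to obtain intermediate views $V''$ that still capture $\map{q_s}{q_t}$ and in which each $V''(a)$ is either a single word $w^a\in\At^*$ or $\emptyset$. Second, I would feed $V''$ into Lemma~\ref{lem:ssd-rewriting-cc}, obtaining views $V'$ with $\L(V'(a))=[w^a]_{A_t}$ whenever $V''(a)=w^a$, and $V'(a)=\emptyset$ whenever $V''(a)=\emptyset$. Lemma~\ref{lem:ssd-rewriting-cc} directly guarantees that $V'$ captures $\map{q_s}{q_t}$, so the rewriting condition is handed to us for free.

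The only point that warrants a brief remark is that $\emptyset$ legitimately qualifies as a congruence class of $A_t$ in the sense introduced earlier in this section. Recall that the classes $C_R$ are indexed by arbitrary binary relations $R\in 2^{S_t\times S_t}$, with $C_R=\{w\in\At^*\mid (p_1,p_2)\in R \Leftrightarrow p_2\in\delta_t(p_1,w)\}$. For any $R$ that is not reachable from $R_{\varepsilon}$ under the transition function $\delta_{\sim}$ of the automaton $A_R$, no word induces exactly the relation $R$, and hence $C_R=\emptyset$. Thus $\emptyset$ is genuinely among the congruence classes of $A_t$, and every view in $V'$ is a congruence class for $A_t$.

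I do not foresee any genuine obstacle: the lemma is essentially a corollary obtained by chaining the two preceding lemmas, provided one makes the small observation above about the empty language. If one preferred to avoid treating $\emptyset$ as a (degenerate) congruence class, one could instead state the conclusion as ``each non-empty view is a congruence class'', but the presentation as written is cleaner and matches how the result is used later.
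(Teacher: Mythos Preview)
Your proposal is correct and mirrors the paper's proof exactly: the paper simply invokes Lemma~\ref{lem:ssd-view-existence-word} to reduce to singleton views and then appeals to Lemma~\ref{lem:ssd-rewriting-cc}. Your additional remark that $\emptyset$ arises as $C_R$ for an unrealised relation $R$ is a nice clarification the paper leaves implicit; note, though, that this relies on some $R\in 2^{S_t\times S_t}$ being unreachable from $R_\varepsilon$, which you should either justify or sidestep via the ``each non-empty view is a congruence class'' phrasing you mention.
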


\begin{proof}
If there exist RPQ views $V$ over $\At$ capturing $\map{q_s}{q_t}$, then by
Lemma~\ref{lem:ssd-view-existence-word}, w.l.o.g., we can assume that $V$ are
singleton views.  Then, the claim follows from
Lemma~\ref{lem:ssd-rewriting-cc}.
\end{proof}

From the above lemma, we can immediately derive an \EXPTIME procedure for view
existence, which gives us an alternative proof of Theorem~\ref{thm:RPQ-tree}.
We first observe that, for an NWA $A_t$ with $m$ states, each view defined by a
congruence class $C_R$ for $A_t$ can be represented by the NWA $A_R$, which has
at most $2^{m^2}$ states.  For a set $V$ of views that are congruence classes,
we can test whether $q_s[V]\not \equiv\emptyset$ and $q_s[V]\sqsubseteq q_t$ by
\begin{itemize}
\item substituting each $a$-transition in the NWA $A_s$ for $q_s$ with the NWA
  $A_{R_a}$, where $C_{R_a}=\L(V(a))$, thus obtaining an NWA $A_{s,V}$;
\item complementing $A_t$, obtaining an NWA $\overline{A_t}$; and
\item checking the nonemptiness of $A_{s,V}$ and the emptiness of
  $A_{s,V}\times\overline{A_t}$.
\end{itemize}
Such a test can be done in time polynomial in the size of $A_s$ and exponential
in the size of $A_t$.

Considering that the number of distinct congruence classes is at most
$2^{m^2}$, the number of possible assignments of congruence classes to
$n$ view symbols occurring in $q_s$ is at most $2^{n\cdot m^2}$.  For
each such assignment defining views $V$, we need to test whether
$q_s[V]\not \equiv\emptyset$ and $q_s[V]\sqsubseteq q_t$.  Hence the
overall check for the view-existence problem requires time exponential
in the size of $A_t$, polynomial in the size of $A_s$ and exponential
in the number of source symbols occurring in $q_s$.

\medskip

The technique presented here based on congruence classes can be adapted to
address also the exact view existence problem.  The difference wrt to
(non-exact) view existence is that in this case we need to consider also views
that are unions of congruence classes.  Indeed, congruence classes (and hence
rewritings) are not closed under union, as shown by the following example.

Let $q_s=a_1\cdot a_2$ and $q_t=00+01+10$.  Then the following two sets of
incomparable views maximally capture $\map{q_s}{q_t}$:
\[
  \begin{array}[t]{l@{~}c@{~}l}
    V_1(a_1) &=& 0,\\
    V_1(a_2) &=& 0+1.
  \end{array}
  \qquad\qquad
  \begin{array}[t]{l@{~}c@{~}l}
    V_2(a_1) &=& 0+1,\\
    V_2(a_2) &=& 0.
  \end{array}
\]
Notice that views $V$, where $V(a_i)=V_1(a_i)+V_2(a_i)$, for $i\in\{1,2\}$,
does not capture $\map{q_s}{q_t}$, since $q_s[V]$ includes $11$.

On the other hand, we can show that considering views that are unions of
congruence classes is sufficient to obtain maximal unfoldings.
We first generalize Lemma~\ref{lem:ssd-rewriting-cc} to non-singleton views.

\begin{lemma}\label{lem:ssd-rewriting-union-of-cc}
  Let $q_s$ be an RPQ over $\As\cup\At$, $q_t$ an RPQ over $\At$ expressed
  through an NWA $A_t$, and $V$ a set of views capturing $\map{q_s}{q_t}$.
  Then $V'$ defined such that
  \[
    \L(V'(a))=
    \begin{cases}
      \bigcup_{w\in\L(V(a))}[w]_{A_t}, & \text{if } V(a)\neq\emptyset\\
      \emptyset, & \text{if } V(a)=\emptyset.
    \end{cases}
  \]
  captures $\map{q_s}{q_t}$.
\end{lemma}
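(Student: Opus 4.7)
The plan is to prove the lemma as a direct generalization of Lemma~\ref{lem:ssd-rewriting-cc}, exploiting the fact that the equivalence $\sim_{A_t}$ (whose classes are the $[w]_{A_t}$) is a congruence for concatenation, as follows from its characterization in terms of binary relations on the states of $A_t$. The two conditions to establish are that $q_s[V']$ is nonempty and that $q_s[V']\sqsubseteq q_t$.

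For nonemptiness, I would simply observe that for every $a\in\As$ with $V(a)\neq\emptyset$ we have $\L(V(a))\subseteq\L(V'(a))$, because each $w\in\L(V(a))$ lies in its own congruence class $[w]_{A_t}$, which is a subset of $\L(V'(a))$ by definition. For the $a$'s with $V(a)=\emptyset$ we have $V'(a)=\emptyset=V(a)$. Hence $\L(q_s[V])\subseteq\L(q_s[V'])$, and nonemptiness of $q_s[V]$ (which holds because $V$ captures the mapping) carries over to $q_s[V']$.

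For soundness, I would pick an arbitrary word $u\in\L(q_s[V'])$. Such a word arises from some word $c_1\cdots c_h\in\L(q_s)$ over $\As\cup\At$, by choosing $u_i\in\L(V'(c_i))$ when $c_i\in\As$ and $u_i=c_i$ when $c_i\in\At$, so that $u=u_1\cdots u_h$. If any $c_i\in\As$ has $V(c_i)=\emptyset$, then $V'(c_i)=\emptyset$, contradicting the choice of $u_i$; so every $V(c_i)$ is nonempty, and for each such $i$ there exists $w_i\in\L(V(c_i))$ with $u_i\in[w_i]_{A_t}$. Setting $w_i=c_i$ when $c_i\in\At$, the word $w_1\cdots w_h$ lies in $\L(q_s[V])$ and hence, since $V$ captures the mapping, in $\L(q_t)=\L(A_t)$.

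The main step then is to transfer membership in $\L(A_t)$ from $w_1\cdots w_h$ to $u_1\cdots u_h$. The key fact is that $\sim_{A_t}$ is a congruence for concatenation: if $u_i\sim_{A_t}w_i$ for each $i$, then $u_1\cdots u_h\sim_{A_t}w_1\cdots w_h$. This follows from the characterization of $[v]_{A_t}$ as $C_R$ with $R=\{(\pp_1,\pp_2)\mid \pp_2\in\delta_t(\pp_1,v)\}$ and from the identity $C_{R_1}\cdot C_{R_2}\subseteq C_{R_1\circ R_2}$ already recorded in the text. I expect this congruence step to be the only subtle point; once it is in place, $u\in\L(A_t)$ follows at once from $w_1\cdots w_h\in\L(A_t)$, completing the proof that $q_s[V']\sqsubseteq q_t$.
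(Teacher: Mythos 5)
Your proof is correct and follows essentially the same route as the paper's: both rest on the characterization of the classes $[w]_{A_t}$ by binary relations over the states of $A_t$, so that replacing factors by congruence-equivalent words preserves the existence of an accepting run of $A_t$, whence $q_s[V']\sqsubseteq q_t$. The only (harmless) differences are that you perform the substitution at all positions simultaneously via the congruence property of concatenation, while the paper argues one position at a time, and that you make the nonemptiness of $q_s[V']$ explicit, which the paper leaves implicit.
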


\begin{proof}
Consider a word $a_1\cdots a_h\in\L(q_s)$.  If there is one of the $a_i$ such
that $V(a_i)=\emptyset$, then
$\L(V(a_1))\cdots\L(V(a_h))=\emptyset\subseteq\L(q_t)$.  Otherwise, we have
that, for $i\in\{1,\ldots,h\}$, for some $w^{a_i}\in\L(V(a_i))$, the word
$w^{a_1}\cdots w^{a_h}\in\L(q_s[V])\subseteq\L(A_t)$.  We show that, for each
$i\in\{1,\ldots,h\}$, we also have that $w^{a_1}\cdots w^{a_{i-1}}\cdot w'\cdot
w^{a_{i+1}}\cdots w^{a_h}\in\L(A_t)$, for each
$w'\in\bigcup_{w\in\L(V(a_i))}[w]_{A_t}$.  First, by definition of rewriting,
if $w^{a_1}\cdots w^{a_h}\in\L(q_s[V])\subseteq\L(A_t)$, then, for each
$w\in\L(V(a_i))$, we also have that $w^{a_1}\cdots w^{a_{i-1}}\cdot w\cdot
w^{a_{i+1}}\cdots w^{a_h}\in\L(q_s[V])\subseteq\L(A_t)$.
Then there is a sequence $\pp_0,\pp_1,\ldots,\pp_h$ of states of $A_t$
such that $\pp_0=\pp^0_t$, $\pp_h\in F_t$,
$\pp_j\in\delta_t(\pp_{j-1},w^{a_j})$, for
$j\in\{1,\ldots,i{-}1,i{+}1,\ldots,h\}$, and
$\pp_i\in\delta_t(\pp_{i-1},w)$.
Then, by the definition of congruence classes, for each word $w'\in[w]_{A_t}$,
we have that $\pp_i\in\delta_t(\pp_{i-1},w'_i)$, and hence
$w^{a_1}\cdots w^{a_{i-1}}\cdot w'\cdot w^{a_{i+1}}\cdots w^{a_h}\in\L(A_t)$.
\end{proof}

The above lemma implies that, when searching for views that maximally capture
the mappings, we can restrict the attention to views that are unions of
congruence classes.

\begin{lemma}\label{lem:rpq-union-congruence-classes}
  Given a mapping $m=\map{q_s}{q_t}$, where $q_t$ is defined by an NWA $A_t$,
  every set of views $V$ that maximally captures $m$ is such that each view in
  $V$ is a union of congruence classes for $A_t$.
\end{lemma}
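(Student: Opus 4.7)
The plan is to derive the lemma as a direct consequence of Lemma~\ref{lem:ssd-rewriting-union-of-cc} together with the definition of maximal views. Suppose $V$ maximally captures $m=\map{q_s}{q_t}$. First, I would define $V'$ exactly as in Lemma~\ref{lem:ssd-rewriting-union-of-cc}, i.e., by closing every non-empty view $V(a)$ under congruence with respect to $A_t$: $\L(V'(a))=\bigcup_{w\in \L(V(a))}[w]_{A_t}$ when $V(a)\neq\emptyset$, and $V'(a)=\emptyset$ otherwise. By Lemma~\ref{lem:ssd-rewriting-union-of-cc}, $V'$ also captures $m$.

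Next I would observe that $V(a)\sqsubseteq V'(a)$ holds for every $a\in\As$: when $V(a)=\emptyset$ this is trivial, and when $V(a)\neq\emptyset$, every word $w\in\L(V(a))$ belongs to its own congruence class $[w]_{A_t}$, which is one of the classes in the union defining $V'(a)$. Since $V$ is maximal and $V'$ captures $m$ with $V(a)\sqsubseteq V'(a)$ for every $a$, the definition of maximality forces $V(a)\equiv V'(a)$ for every $a\in\As$. Because $V'(a)$ is by construction either empty (which is itself the empty union of congruence classes) or a union of congruence classes of $A_t$, the same holds for each $V(a)$, and this is exactly the conclusion of the lemma.

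I do not expect any serious obstacle: the content of the lemma is essentially a closure argument, and all the work is already carried by Lemma~\ref{lem:ssd-rewriting-union-of-cc}. The only point requiring a little care is making sure that the notion of ``maximal'' from Section~\ref{sec:view-synthesis} is applied correctly, namely that $V(a)\sqsubseteq V'(a)$ for \emph{all} $a$, and that $V'$ itself captures $m$, so that strict enlargement of any $V(a)$ would contradict maximality. Once these two points are verified, the equivalence $V(a)\equiv V'(a)$ is immediate and the lemma follows.
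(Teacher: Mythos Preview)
Your proposal is correct and follows essentially the same idea as the paper: both arguments rely on Lemma~\ref{lem:ssd-rewriting-union-of-cc} to enlarge $V$ to its congruence closure while preserving capture, and then invoke maximality. The only cosmetic difference is that the paper phrases it as a proof by contradiction (assuming some $V(a)$ misses a word $w'$ from a congruence class it meets, and adding just $w'$), whereas you pass directly to the full closure $V'$ and conclude $V\equiv V'$; your version is arguably cleaner.
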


\begin{proof}
Consider a set of views $V$ that maximally captures $m$, and assume that for
some $a\in\As$, $V(a)$ is not a union of congruence classes for $A_t$. Then
there is some word $w\in\L(V(a))$ and some word $w'\in [w]_{A_t}$ such that
$w'\notin\L(V(a))$.  By Lemma~\ref{lem:ssd-rewriting-union-of-cc}, the set of
views $V'$ with $\L(V'(a))=\L(V(a))\cup\{w'\}$ also captures $m$, thus
contradicting the maximality of $V$.
\end{proof}

We get the following upper bound for the exact view existence problem.

\begin{theorem}\label{thm:rpq-exact-view-existence}
  In the case where $\Qs$, $\Qt$, and $\Qv$ are RPQs, the exact view existence
  problem is in \EXPSPACE.
\end{theorem}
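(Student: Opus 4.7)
The plan is to search for exact captures among views that are unions of congruence classes of an NWA $A_t$ for $q_t$. After reducing to a single mapping via Theorem~\ref{thm:rpqs-single-mapping}, I would first observe that if $V$ exactly captures $\map{q_s}{q_t}$, then there exists a \emph{maximal} $V'$ capturing $\map{q_s}{q_t}$ with $V(a)\sqsubseteq V'(a)$ for every $a\in\As$. Since $q_t\equiv q_s[V]\sqsubseteq q_s[V']$ by monotonicity, and since $V'$ captures $\map{q_s}{q_t}$ (so $q_s[V']\sqsubseteq q_t$), we obtain $q_s[V']\equiv q_t$; that is, $V'$ is also an exact capture. By Lemma~\ref{lem:rpq-union-congruence-classes} each $V'(a)$ is then a union of congruence classes of $A_t$, so it suffices to search within this restricted class.

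Let $N$ be the number of states of $A_t$ and $n$ the number of source symbols appearing in $q_s$. Each candidate $V(a)$ is a subset of $\R = 2^{S_t\times S_t}$, encodable in $2^{N^2}$ bits, giving a total of $n\cdot 2^{N^2}$ bits---single-exponential space---to describe $V$. The algorithm enumerates all such $V$ in exponential space, and for each checks three conditions: (i) $q_s[V]\not\equiv\emptyset$, (ii) $q_s[V]\sqsubseteq q_t$, and (iii) $q_t\sqsubseteq q_s[V]$.

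To perform the checks, each $V(a)$ is realized by a DWA built from the congruence-class automaton $A_R$ of Section~\ref{sec:cc-solution}, which has $2^{N^2}$ states and whose accepting set is chosen to be the set of relations constituting $V(a)$. Substituting these automata for the source letters in $A_s$ yields an NWA $A_{s,V}$ of size polynomial in $|A_s|$ and single-exponential in $N$. Test (i) is a reachability check on $A_{s,V}$; tests (ii) and (iii) are NWA containment tests. Since containment between NWAs is in PSPACE in the sum of their sizes, and our automata are of single-exponential size in the original input, each such test can be carried out in \EXPSPACE.

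The main obstacle is test (iii): complementing $A_{s,V}$ explicitly by the subset construction would produce $2^{|A_s|\cdot 2^{N^2}}$ states, which is doubly exponential. The key is to avoid materializing the complement and instead invoke the standard PSPACE algorithm for NWA containment on the fly, whose working memory is polynomial in $|A_{s,V}|+|A_t|$, i.e., single-exponential in the original input. Combined with the exponential space used to enumerate and store the current candidate $V$, the overall procedure stays within \EXPSPACE.
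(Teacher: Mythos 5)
Your proposal is correct and follows essentially the same route as the paper: restrict attention to views that are unions of congruence classes of $A_t$, build $A_{s,V}$ by plugging the (single-exponential) congruence-class automata into $A_s$, check $q_s[V]\sqsubseteq q_t$ and $q_t\sqsubseteq q_s[V]$ while complementing $A_{s,V}$ only on the fly, and conclude \EXPSPACE (the paper guesses $V$ nondeterministically and invokes \NEXPSPACE${}={}$\EXPSPACE, which is interchangeable with your deterministic enumeration). The one wrinkle is your unjustified appeal to the existence of a \emph{maximal} capture dominating $V$; this detour is unnecessary, since Lemma~\ref{lem:ssd-rewriting-union-of-cc} already lets you replace $V$ by its congruence closure, which captures the mapping, dominates $V$, and is a union of congruence classes, so monotonicity gives exactness directly.
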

\begin{proof}
By Lemma~\ref{lem:rpq-union-congruence-classes}, we can nondeterministically
choose views $V$ that are unions of congruence classes and then test whether
$q_t\equiv q_s[v]$ (we assume that $q_t\not\equiv\emptyset$, otherwise the
problem trivializes).  To do so, we build an NWA $A_{s,V}$ accepting
$\L(q_s[V])$ as follows.  We start by observing that for each union $U$ of
congruence classes, we can build the automaton
$A_U=(\R,\pp_t,R_\epsilon,\delta_\sim,U)$ accepting the words in $U$, which
incidentally, is deterministic.  Hence, by substituting each $a$-transition in
the NWA $A_s$ for $q_s$ with the NWA $A_{U_a}$, where $V(a)=U_{a}$, we obtain
an NWA $A_{s,V}$.  Note that, even when $A_s$ is deterministic $A_{s,V}$ may be
nondeterministic.

To test $q_s[V]\sqsubseteq q_t$, we complement $A_t$, obtaining the NWA
$\overline{A_t}$, and check the NWA $A_{s,V}\times\overline{A_t}$ for
emptiness.  The size of $A_{s,V}\times\overline{A_t}$ is polynomial in the size
of $\A_s$ and exponential in the size of $\A_t$.  Checking for emptiness
can be done in exponential time, and considering the initial nondeterministic
guess, we get a \NEXPTIME upper bound.

To test $q_t\sqsubseteq q_s[V]$, we complement $A_{s,V}$, obtaining the NWA
$\overline{A_{s,V}}$, and check $A_t\cap\overline{A_{s,V}}$ for emptiness.
Since $\overline{A_{s,V}}$ is nondeterministic, complementation is
exponential. However, we observe that such a complementation can be done on the
fly in \EXPSPACE, while checking for emptiness and intersecting with $A_t$.  As
a consequence, considering the initial nondeteministic guess, exact view
existence can be decided in \NEXPSPACE, which is equivalent to \EXPSPACE.
\end{proof}

\section{Extensions}
\label{sec:extensions}

In this section we sketch the extension of the results of the previous
section to more expressive classes of queries: 2RQPs, CRPQs, UCRPQs, and
UC2RPQs.

\subsection{2RPQs}
\label{subsec-2RPQs}

Consider now the view-synthesis problem for the case where $\Qs$,
$\Qt$, and $\Qv$ are 2RPQs, expressed by means of NWAs over
the alphabets $\Sigma^\pm$ and $\Delta^\pm$.

A key concept for 2RPQs is that of \emph{folding}. Let $u,v\in\Sigma^\pm$.
We say that $v$ \emph{folds} onto $u$, denoted $v\rightsquigarrow u$, if $v$
can be ``folded'' on $u$, e.g., $abb^-bc\rightsquigarrow abc$.  Formally, we
say that $v=v_1 \cdots v_m$ folds onto $u=u_1 \cdots u_n$ if there is a
sequence $i_0,\ldots,i_m$ of positive integers between $0$ and $|u|$ such that
\begin{itemize}
\item
$i_0=0$ and $i_m=n$, and
\item
for $j\in\{0,\ldots,m\}$,
either $i_{j+1}=i_j+1$ and $v_{j+1}=u_{i_j+1}$,
or $i_{j+1}=i_j-1$ and $v_{j+1}=u^-_{i_{j+1}}$.
\end{itemize}
Let $L$ be a language over $\Sigma^\pm$. We define
$\mathit{fold}(L)=\{u~:~v\rightsquigarrow u, v\in L\}$.

A language-theoretic characterization for containment of 2RPQs was
provided in \cite{CDLV03c}:
\begin{lemma}\label{lang1}
Let $q_1$ and $q_2$ be 2RPQs. Then
$q_1 \sqsubseteq q_2$ iff $\L(q_1) \subseteq \mathit{fold}(\L(q_2))$.
\end{lemma}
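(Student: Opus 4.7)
The plan is to prove the two directions separately, appealing to the semantics of semipaths over semistructured databases.

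For the easy direction $(\Leftarrow)$, suppose $\L(q_1) \subseteq \mathit{fold}(\L(q_2))$. Take any database $\D$ over $\Sigma$ and any pair $(x,y) \in q_1^{\D}$. By definition of 2RPQ evaluation, there is a semipath $\pi$ from $x$ to $y$ whose label $u = u_1 \cdots u_n$ lies in $\L(q_1)$. By assumption, there is some $v = v_1 \cdots v_m \in \L(q_2)$ with $v \rightsquigarrow u$, witnessed by indices $i_0, \ldots, i_m$. I would construct a semipath $\pi'$ from $x$ to $y$ labeled by $v$ by traversing the vertices of $\pi$ according to the sequence $i_0, \ldots, i_m$: at step $j$, move from the $i_j$-th vertex of $\pi$ to the $i_{j+1}$-th, which is possible because whenever $i_{j+1} = i_j + 1$ the label $v_{j+1}$ agrees with $u_{i_j+1}$, and whenever $i_{j+1} = i_j - 1$ we have $v_{j+1} = u_{i_{j+1}}^-$, so $\pi'$ conforms to the definition of a semipath. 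Since $v \in \L(q_2)$, this yields $(x,y) \in q_2^{\D}$.

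For the harder direction $(\Rightarrow)$, I would use a canonical-database construction. For each $u = u_1 \cdots u_n \in \L(q_1)$, build a database $\D_u$ with distinct objects $y_0, \ldots, y_n$ and an edge $\edge{y_{i-1}}{r}{y_i}$ if $u_i = r \in \Sigma$, and an edge $\edge{y_i}{r}{y_{i-1}}$ if $u_i = r^- \in \Sigma^\pm$. Then $u$ itself labels a (forward) semipath from $y_0$ to $y_n$, so $(y_0, y_n) \in q_1^{\D_u}$. By $q_1 \sqsubseteq q_2$ we obtain $(y_0, y_n) \in q_2^{\D_u}$, meaning there is a semipath in $\D_u$ from $y_0$ to $y_n$ labeled by some $v = v_1 \cdots v_m \in \L(q_2)$. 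The crux is then to show $v \rightsquigarrow u$: since $\D_u$ consists solely of the edges used to encode $u$, every vertex visited by the semipath must be some $y_{i_j}$, and each step either advances or retreats by one along the path, with the label forced to be $u_{i_j+1}$ or $u_{i_{j+1}}^-$ respectively; the sequence $i_0, \ldots, i_m$ of visited indices is therefore exactly a folding witness.

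The main obstacle is ensuring that in $\D_u$ every semipath from $y_0$ to $y_n$ is genuinely a folding of $u$ rather than a shortcut via coincidentally equal labels appearing at different positions. This is handled by the fact that $\D_u$ is constructed as a free path of fresh objects with no extra edges, so the only edges incident to $y_i$ are those arising from positions $i$ and $i+1$ of $u$; this forces any semipath to move strictly by $\pm 1$ in the index, and the label constraints match exactly the definition of $v \rightsquigarrow u$.
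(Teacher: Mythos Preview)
The paper does not actually prove this lemma; it is quoted as a result from~\cite{CDLV03c} and stated without argument. So there is no ``paper's own proof'' to compare against.

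That said, your proposal is a correct reconstruction of the standard argument for this characterization. The $(\Leftarrow)$ direction is exactly right: a folding witness $i_0,\ldots,i_m$ tells you how to walk back and forth along the semipath for $u$ so as to spell out $v$. For the $(\Rightarrow)$ direction, the line-shaped canonical database $\D_u$ on \emph{distinct} objects $y_0,\ldots,y_n$ is precisely the right instrument, and you correctly identify and resolve the only delicate point: because each $y_i$ is adjacent only to $y_{i-1}$ and $y_{i+1}$, any semipath from $y_0$ to $y_n$ must move by $\pm 1$ in the index at every step, and the label of that step is forced to be $u_{i_j+1}$ (forward) or $u_{i_{j+1}}^{-}$ (backward), which is exactly the folding condition. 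There is no gap.
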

Furthermore, it is shown in \cite{CDLV03c} that if
$A$ is an $n$-state NWA over $\Sigma^\pm$,  then there is a
2NWA for $\mathit{fold}(\L(A))$ with $n\cdot (|\Sigma^\pm|+1)$ states.
(We use 2NWA to refer to two-way automata on words.)

In the view-existence problem , we are given queries $q_s$ and $q_t$, expressed
as NWAs $A_s$ and $A_t$, and we are asked whether there exist
nonempty 2RPQ views $V$ such that $q_s[V]\sqsubseteq q_t$ and such that
$q_s[V]\not\equiv\emptyset$.
We can use Lemma~\ref{lang1} for the tree-automata solution.  A labeled tree
$V:(\Sigma^\pm) \rightarrow \{0,1\}^n$ represents candidate views. To check
that $q_s[V]\not\sqsubseteq q_t$, we check that $\L(A_s[V])\not\subseteq
\mathit{fold}(\L(A_t))$. We now proceed as in Section~\ref{sec:tree-solution},
using the 2NWA for $\mathit{fold}(\L(A_t))$ instead of $A_t$. This
requires first converting the 2NWA to an NWA with an exponential
blow-up~\cite{SaSi78}, increasing the overall complexity to 2\EXPTIME.

We can also use Lemma~\ref{lang1} for the congruence-based solution.
Here also a simplistic approach would be to convert the 2NWA for
$\mathit{fold}(\L(A_t))$ into an NWA, with an exponential blow-up,
and proceed as in Section~\ref{sec:cc-solution}.
To avoid this exponential blowup, we need an exponential bound on the number
of congruence classes. For an NWA, we saw that each congruence class can be
defined in terms of a binary relation over its set of states.  It turns out
that for a 2NWA $A$, a congruence class can be defined in terms of \emph{four}
binary relations over the set $S_t$ of states of $A$:
\begin{enumerate}
\item $R_{lr}$: a pair $(\pp_1,\pp_2)\in R_{lr}$ means that there is a
  word $w$ that leads $A$ from $\pp_1$ to $\pp_2$, where $w$ is entered
  on the left and exited on the right.
\item $R_{rl}$: a pair $(\pp_1,\pp_2)\in R_{rl}$ means that there is a
  word $w$ that leads $A$ from $\pp_1$ to $\pp_2$, where $w$ is entered
  on the right and exited on the left.
\item $R_{ll}$: a pair $(\pp_1,\pp_2)\in R_{ll}$ means that there is a
  word $w$ that leads $A$ from $\pp_1$ to $\pp_2$, where $w$ is entered
  on the left and exited on the left.
\item $R_{rr}$: a pair $(\pp_1,\pp_2)\in R_{rr}$ means that there is a
  word $w$ that leads $A$ from $\pp_1$ to $\pp_2$, where $w$ is entered
  on the right and exited on the right.
\end{enumerate}
Thus, the number of congruence classes when $A$ has $m$ states is
$2^{4m^2}$ rather than $2^{m^2}$, which is still an exponential.
This enables us to adapt the technique of Section~\ref{sec:cc-solution}
with essentially the same complexity bounds.

\begin{theorem}\label{thm:2RPQs}
  In the case where $\Qs$, $\Qt$, and $\Qv$ are 2RPQs, the view-existence
  problem is \EXPTIME and the exact view-existence problem is in \EXPSPACE.
\end{theorem}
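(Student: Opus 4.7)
The plan is to follow the blueprint sketched in the paragraphs preceding the theorem, i.e., to extend the congruence-class-based technique of Section~\ref{sec:cc-solution} to 2RPQs, using Lemma~\ref{lang1} to translate 2RPQ-containment into the language-theoretic condition $\L(q_s[V])\subseteq\mathit{fold}(\L(A_t))$. As a preliminary step, I would generalize Theorem~\ref{thm:rpqs-single-mapping} by concatenating the source (resp., target) sides of all mappings through a fresh separator symbol $\#\notin\As\cup\At^\pm$, so that the set $M$ of 2RPQ mappings collapses, w.l.o.g., to a single mapping $\map{q_s}{q_t}$. Representing $q_t$ by an NWA $A_t$ over $\At^\pm$ with $m$ states, I would use the result of~\cite{CDLV03c} recalled in Section~\ref{subsec-2RPQs} to obtain a 2NWA $B_t$ of polynomial size accepting $\mathit{fold}(\L(A_t))$, so that view existence reduces to deciding whether there are views $V$ with $\L(q_s[V])\subseteq\L(B_t)$ and $q_s[V]\not\equiv\emptyset$.

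The key technical step is to extend the syntactic-congruence machinery from one-way to two-way automata, as indicated in the paragraph preceding the theorem. For the 2NWA $B_t$, the congruence class of a word $w\in(\At^\pm)^*$ is determined not by a single binary relation on states but by the quadruple of relations $R_{lr},R_{rl},R_{ll},R_{rr}\subseteq S\times S$ that record, for each pair of states, whether $B_t$ admits a traversal of $w$ of the corresponding left/right entry-exit type. This yields at most $2^{4m^2}$ congruence classes, still exponential in $m$, and the composition of quadruples induced by concatenation of words is obtained by a standard, if delicate, bookkeeping over all possible back-and-forth crossings of the boundary. With this in hand, the analogs of Lemma~\ref{lem:ssd-view-existence-word}, Lemma~\ref{lem:ssd-rewriting-cc}, Lemma~\ref{lem:ssd-view-existence-cc}, Lemma~\ref{lem:ssd-rewriting-union-of-cc}, and Lemma~\ref{lem:rpq-union-congruence-classes} lift verbatim: for view existence we may assume that each $V(a)$ is either empty or a single congruence class of $B_t$; for exact view existence we may assume that each $V(a)$ is a union of congruence classes of $B_t$.

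Given this, the complexity bounds follow by exactly the counting arguments of Section~\ref{sec:cc-solution}. For view existence, each candidate $V(a)$ is represented by a DWA of at most $2^{4m^2}$ states; substituting into $A_s$ yields an NWA $A_{s,V}$ of size polynomial in $|A_s|$ and exponential in $m$, and checking $\L(A_{s,V})\subseteq\L(B_t)$ together with nonemptiness reduces, via complementation of the 2NWA $B_t$ and product construction, to an emptiness test that runs in time exponential in $m$ and polynomial in $|A_s|$; combined with the exponential number of assignments, this gives \EXPTIME. For exact view existence, I would additionally test $\L(B_t)\subseteq\L(A_{s,V})$: since $A_{s,V}$ is nondeterministic its complement blows up exponentially, but it can be built on the fly in exponential space while checking for emptiness of the intersection with $B_t$, giving \NEXPSPACE, which equals \EXPSPACE by Savitch.

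The main obstacle is the four-relation characterization itself, and in particular verifying that the induced profile composition is a monoid homomorphism with respect to concatenation and that the resulting equivalence is a genuine two-sided congruence for $\mathit{fold}(\L(A_t))$: one has to track back-and-forth traversals of the 2NWA across a concatenation boundary, which is precisely where the $ll$- and $rr$-components enter the composition law, and to ensure that membership in $\mathit{fold}(\L(A_t))$ is determined by the quadruple alone so that Lemma~\ref{lang1} applies cleanly inside the reasoning used for the analogs of Lemmas~\ref{lem:ssd-rewriting-cc} and~\ref{lem:ssd-rewriting-union-of-cc}.
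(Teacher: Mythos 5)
Your proposal is correct and follows essentially the same route as the paper: it uses the folding characterization of 2RPQ containment (Lemma~\ref{lang1}) together with the 2NWA for $\mathit{fold}(\L(A_t))$, replaces single binary relations by the four-relation profiles $R_{lr},R_{rl},R_{ll},R_{rr}$ to keep the number of congruence classes bounded by $2^{4m^2}$, and then reuses the machinery and counting of Section~\ref{sec:cc-solution} to get \EXPTIME and, via the nondeterministic guess plus on-the-fly complementation and \NEXPSPACE${}={}$\EXPSPACE, the \EXPSPACE bound. The extra details you supply (the separator-based reduction to a single mapping and the verification that the profile composition behaves as a congruence) are exactly the points the paper leaves as a sketch, and they are filled in consistently with its intent.
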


\subsection{CRPQs}
\label{subsec-CRPQs}

Consider now the view-synthesis problem for the case where
$\Qs$ and $\Qt$ are CRPQs,
where the constituent RPQs are expressed by means of NWAs.
Here the views have to be RPQs, rather than CRPQs, since CRPQs are not closed
under substitutions. Thus, we can still represent views in terms of a labeled
tree $V: \Sigma^* \rightarrow \{0,1\}^n$.  The crux of our approach is to
reduce containment of two CRPQs, $q_1$ and $q_2$ to containment of standard
languages. This was done in~\cite{CDLV00b}.

Let $q_h$, for $h=\{1,2\}$, be in the form
\[
  q_h(x_1,\ldots,x_n) \leftarrow
    \begin{array}[t]{@{}l}
      q_{h,1}(y_{h,1},y_{h,2}) \land\cdots\land{}\\
      q_{h,m_h}(y_{h,2m_h-1},y_{h,2m_h})
    \end{array}
\]
and let $\V_1$, $\V_2$ be the sets of variables of $q_1$
and $q_2$ respectively. It is shown in \cite{CDLV00b} that the
containment $q_1\sqsubseteq q_2$ can be reduced to the containment
$\L(A_1)\subseteq\L(A_2)$ of two word automata $\A_1$ and $\A_2$.
$\A_1$ is an NWA, whose size is exponential in $q_1$ and it accepts
certain words of the form
\[
  \$ d_1 w_1 d_2 \$ d_3 w_2 d_4 \$ \cdots \$ d_{2m_1-1} w_{m_1} d_{2m_1}\$
\]
where each $d_i$ is a subset of $\V_1$ and the words $w_i$ are over the alphabet
of $A_1$. Such words constitute a linear representation of certain
semistructured databases that are canonical for $q_1$ in some sense.
$A_2$ is a 2NWA, whose size is an exponential in the size of $q_2$, and it
accepts words of the above form if the there is an appropriate
mapping from $q_2$ to the database represented by these words.
The reduction of the containment $q_1\sqsubseteq q_2$ to
$\L(A_1)\subseteq\L(A_2)$ is shown in~\cite{CDLV00b}.

We can now adapt the tree-automata technique of
Section~\ref{sec:tree-solution}.  From $q_s$ and $q_t$ we can construct
word automata $A_s$ and $A_t$ as in~\cite{CDLV00b}.  We now ask
if we have nonempty RPQ views $V$ such that $\L(A_s[V])\subseteq\L(A_t)$.
This can be done as in Section~\ref{sec:tree-solution}, after converting
$A_t$ to an NWA.

The ability to reduce containment of CRPQs to containment of word
automata means that we can also apply the congruence-class technique of
Section~\ref{sec:cc-solution}. Suppose that we have nonempty RPQ views
$V$ such that $\L(A_s[V])\subseteq\L(A_t)$. Then we can again assume that
the views are closed with respect to the congruence classes
of $A_t$.  Thus, the techniques of Section~\ref{sec:cc-solution} can
be applied.

\begin{theorem}\label{thm:C2RPQ}
  In the case where $\Qs$, $\Qt$, and $\Qv$ are \mbox{CRPQs}, the
  view-existence problem is in \textsc{2}\EXPTIME, and the exact view-existence
  problem is in \textsc{2}\EXPSPACE.
\end{theorem}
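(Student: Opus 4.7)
The plan is to combine the reduction of~\cite{CDLV00b} (which encodes containment of CRPQs as language containment of word automata) with the tree-automata technique of Section~\ref{sec:tree-solution} for the view-existence bound, and with the congruence-class technique of Section~\ref{sec:cc-solution} for the exact view-existence bound. As justified by the analog of Theorem~\ref{thm:rpqs-single-mapping}, I may reduce to a single mapping $m=\map{q_s}{q_t}$ (the extra $\#$ separator is just another target letter of an RPQ, so it is compatible with both kinds of target reduction).

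First, I would construct from $q_s$ and $q_t$ the word automata $A_s$ and $A_t$ of~\cite{CDLV00b}: $A_s$ is an NWA of size exponential in $q_s$ and $A_t$ is a 2NWA of size exponential in $q_t$, such that for every choice of RPQ views $V$, $q_s[V]\sqsubseteq q_t$ iff $\L(A_s[V])\subseteq \L(A_t)$. The key observation that makes this compatible with the earlier techniques is that substituting RPQ views into the CRPQ $q_s$ amounts to substituting each $a$-transition of $A_s$ by an NWA for $V(a)$, so the algebraic structure exploited in Sections~\ref{sec:tree-solution} and~\ref{sec:cc-solution} is preserved. Non-emptiness of $q_s[V]$ is handled, as before, by a product with a small auxiliary automaton that forces at least one word accepted by $A_s$ to use only source letters whose views are non-empty.

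For view-existence, I would convert the 2NWA $A_t$ into an equivalent NWA using~\cite{SaSi78}, and then feed $A_s$ together with this NWA into the construction of Lemma~\ref{captures3}, yielding a B\"uchi tree automaton that accepts exactly the view-trees capturing $m$. Testing non-emptiness in quadratic time~\cite{VaWo86}, and tracking the compounded single-exponential blow-ups, gives the \textsc{2}\EXPTIME upper bound. For exact view-existence, I would lift Lemma~\ref{lem:rpq-union-congruence-classes} to this setting: since $A_t$ is now a 2NWA, I would use the four-relation congruence-class characterization of Section~\ref{subsec-2RPQs}, so the number of congruence classes remains singly exponential in $|A_t|$, hence doubly exponential in $|q_t|$. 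I would then nondeterministically guess, for each source symbol in $q_s$, a union of such classes, build $A_{s,V}$ by plugging in the corresponding deterministic NWAs, and run the two containment checks of Theorem~\ref{thm:rpq-exact-view-existence}, using on-the-fly complementation in the $q_t\sqsubseteq q_s[V]$ direction to stay within single-exponential space relative to $|A_t|$. The resulting N\textsc{2}\EXPSPACE procedure collapses to \textsc{2}\EXPSPACE by Savitch's theorem.

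The main obstacle will be verifying that the congruence-class framework, originally developed for an NWA $A_t$ with views substituted into an NWA $A_s$ coming directly from an RPQ, transports cleanly to the CRPQ setting where $A_t$ is a 2NWA produced by the non-trivial reduction of~\cite{CDLV00b} and $A_s$ is itself an exponential-size NWA over a mixed source/target alphabet; once the four-relation congruence characterization is in place and the substitution step into $A_s$ is seen to be compatible with it, the remaining complexity bookkeeping is routine.
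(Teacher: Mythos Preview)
Your proposal is essentially the paper's approach: reduce CRPQ containment to language containment of word automata via~\cite{CDLV00b}, then replay Section~\ref{sec:tree-solution} for view-existence and Section~\ref{sec:cc-solution} for exact view-existence. Your explicit choice to use the four-relation 2NWA congruence of Section~\ref{subsec-2RPQs} (rather than first converting $A_t$ to an NWA) is the right call and is what is needed to land inside \textsc{2}\EXPSPACE; the paper's text is terse on this point but your elaboration is consistent with it.

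One caveat: your appeal to an ``analog of Theorem~\ref{thm:rpqs-single-mapping}'' is not quite right. That reduction concatenates RPQs with a fresh separator symbol $\#$, which makes sense because RPQs are regular languages; CRPQs are conjunctive queries over 2RPQ atoms, and there is no concatenation operation on them, so the $\#$-trick does not transfer. The paper does not use a single-mapping reduction here; multiple mappings are handled directly, by taking the product of the per-mapping tree automata in the view-existence case, and by working with congruence classes relative to all the $A_t$'s simultaneously in the exact case. This is a minor detour in your write-up rather than a genuine gap, since nothing downstream in your argument actually depends on having collapsed to a single mapping.
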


\subsection{UC2RPQs}\label{subsec-UC2RPQs}

Here we allow both C2RPQs and unions. Since UC2RPQs are not closed
under substitutions, we consider here 2RPQ views. The extension to
handle unions is straightforward.  To handle C2RPQs, we need to combine
the techniques of Sections~\ref{subsec-2RPQs} and~\ref{subsec-CRPQs}.
The key idea is the reduction of query containment to containment of
word automata. The resulting upper bounds are identical to those we
obtained for \mbox{CRPQs}.


\section{Conclusions}
\label{sec:conclusions}

In this paper we have addressed the issue of synthesizing a set of
views starting from a collection of mappings relating a source schema
to a target schema.

We have argued that the problem is relevant in several scenarios, especially
data warehousing, data integration and mashup, and data exchange. We have
provided a formalization of the problem based on query rewriting, and we have
presented techniques and complexity upper bounds for two cases, namely,
relational data, and graph-based semistructured data.  We concentrated on the
basic problems of view-existence, and we have shown that in both cases the
problem is decidable, with different complexity upper bounds depending on the
types of query languages used in the mappings and the views, and on the variant
(sound or exact rewriting) of the problem.

We plan to continue investigating the view-synthesis problem along different
directions. First, we aim at deriving lower complexity bounds for the
view-existence problem.  Secondly, we are interested in studying view-synthesis
for tree-based (e.g., XML) semistructured data.  Finally, while in this paper
we have based the notion of view-synthesis on query rewriting, it would be
interesting to explore a variant of this notion, based on query answering using
views.  In this variant, views $V$ capture a mapping of the form
$\map{q_s}{q_t}$ if, for each source database $\Ds$, the query $q_s$ computes
the certain answers to $q_t$ wrt $V$ and $D_s$ \cite{CDLV07}.





\bibliographystyle{abbrv}
\bibliography{main-bib}

\end{document}